\newcolumntype{c}{>{\centering\arraybackslash}p{8em}}
\newcolumntype{N}{>{\centering\arraybackslash}p{5em}}
\newcolumntype{D}{>{\centering\arraybackslash}p{7em}}
\newcolumntype{S}{>{\centering\arraybackslash}p{10em}}
\newcommand{\real}{\text{Re}}
\title{Multiuser Wireless Power Transfer via Magnetic Resonant Coupling: Performance Analysis, Charging Control, and Power Region Characterization}
\author{Mohammad R. Vedady Moghadam, \textit{Member}, \textit{IEEE}, and Rui Zhang, \textit{Senior Member}, \textit{IEEE}
\thanks{This paper was presented in part at IEEE International Conference on Acoustics, Speech, and Signal Processing (ICASSP), Brisbane,
Australia, April 19-24, 2015 \cite{ReZa}.}
\thanks{M. R. Vedady Moghadam is with the Department of Electrical and Computer Engineering, National University of Singapore, Singapore 117583  (e-mail: vedady.m@u.nus.edu).}
\thanks{R. Zhang is with the Department of Electrical and Computer Engineering, National University of Singapore, Singapore 117583 (e-mail: elezhang@nus.edu.sg).  He is also with the Institute for Infocomm Research, A*STAR, Singapore 138632.}
}
\begin{document}
\maketitle \thispagestyle{empty} 
\begin{abstract}
Magnetic resonant coupling (MRC) is an efficient method for realizing the  near-field  wireless power transfer (WPT). 
Although the MRC enabled WPT (MRC-WPT) with a single pair of transmitter and receiver has been thoroughly  studied in the literature, there is limited  work on the general setup  with  multiple  transmitters and/or receivers. 
In this paper, we consider a \textit{point-to-multipoint} MRC-WPT system with  one transmitter delivering  wireless  power to a set of distributed receivers. 
We aim to introduce new applications of signal processing and optimization techniques to the performance characterization and optimization in  multiuser WPT via MRC.  
We first derive closed-form expressions for the power  drawn from the energy source at the transmitter and that delivered to  the load at each receiver. 
We identify a \textit{``near-far''} fairness issue in multiuser power transmission due to receivers' distance-dependent mutual inductance with the transmitter.
To tackle this issue,  we propose a centralized \textit{charging control}  algorithm to jointly optimize the receivers' load  resistance to minimize the total transmitter power drawn while meeting  the given power requirement of each individual load.  
For ease of practical implementation, we also devise a \textit{distributed} algorithm for the receivers to adjust their load resistance independently in an iterative manner. 
Last, we characterize the \textit{power region} that constitutes all the achievable power-tuples of the loads via controlling their adjustable resistance. In particular, we compare the power regions without versus with the \textit{time sharing} of users' power transmission, where it is shown that time sharing yields  a larger power region in general.  
Extensive simulation results are provided to validate our analysis and corroborate our study on the multiuser MRC-WPT system.  
\end{abstract}
\begin{keywords}
Wireless power transfer, magnetic resonant coupling,  multiuser charging control, optimization, iterative algorithm, power region,  time sharing.
\end{keywords}
\setlength{\baselineskip}{1.0\baselineskip}
\newtheorem{definition}{\underline{Definition}}[section]
\newtheorem{fact}{Fact}
\newtheorem{assumption}{Assumption}
\newtheorem{theorem}{\underline{Theorem}}[section]
\newtheorem{lemma}{\underline{Lemma}}[section]
\newtheorem{corollary}{\underline{Corollary}}[section]
\newtheorem{proposition}{\underline{Proposition}}[section]
\newtheorem{example}{\underline{Example}}[section]
\newtheorem{remark}{\underline{Remark}}[section]
\newtheorem{algorithm}{\underline{Algorithm}}[section]
\newcommand{\mv}[1]{\mbox{\boldmath{$ #1 $}}}
\section{Introduction} \label{sec:introduction}
Inductive coupling  \cite{Murakami,Kim} is a traditional method to realize the near-field wireless power transfer (WPT) for short-range applications in e.g., centimeters. Recently,  magnetic resonant coupling (MRC)  \cite{Kurs, Fei, Shin, Chen} has drawn significant interest for implementing the near-field WPT due to its high power transfer efficiency as well as long operation range, say, up to a couple of meters. Furthermore,  MRC effectively avoids  the power leakage to non-resonant externalities  and thus ensures safety to the neighboring environment. 

Two different methods are commonly adopted in practice to implement   MRC enabled WPT (MRC-WPT). 
In the first method  \cite{Kurs,Fei}, resonators,  each of which is a tunable RLC circuit,  are  placed in close proximity of  the electromagnetic (EM) coils of the energy transmitters and receivers to efficiently transfer power between them. 
Since resonators are designed to resonate at the system's operating frequency,  the total reactive power consumption in the system is effectively minimized at resonance   and hence high power transfer efficiency is achieved over longer distance as compared to conventional  inductive coupling. 
In the second method  \cite{Shin,Chen}, series and/or shunt compensators, each of which  is a capacitor of variable capacity, are embedded in the electric circuits of energy transmitters  and receivers with  their natural  frequencies  set same as the system's operating frequency to achieve resonance.   
Generally speaking, the second method  achieves higher  power transfer efficiency over the first method, since in the first method  resonators   incur additional power loss due to their parasitic resistance. 
However,  the electric circuits of energy transmitters and receivers need to be  accessible  in the second method to embed compensators in them.  

The MRC-WPT system with a single pair of transmitter and receiver has been extensively studied in the literature,  with the aims such as  maximizing  the   end-to-end power transfer efficiency or  maximizing the power   delivered to the receiver's load with a given input power \cite{Cannon,Jonah, YZhang1,YZhang2}. 
Moreover,  systems with  two transmitters and a single receiver or with a single transmitter and two receivers have been studied in \cite{Yoon,K-Lee, Ahn, Garnica, Johari}, while their results cannot be directly applied to the systems with more than two transmitters/receivers.  Recently, an MRC-WPT system with multiple transmitters and one single receiver  has been  investigated in \cite{JD} to wirelessly charge a cellphone located at $40$ centimeters away, independent of the phone's orientation. However,  the interactions between  the energy transmitters and receiver  were demonstrated only  through   simulations in \cite{JD}. 
There have been other recent works (see e.g. \cite{Waters,Lang})   on optimizing the performance of  MRC-WPT systems with multiple transmitters and one single receiver.

Different from the above works, in this paper we  consider a \textit{point-to-multipoint} MRC-WPT  system based on the series compensator method aforementioned, as shown in Fig. \ref{fig:ElecCirtuit}, where one transmitter that is connected to a stable energy source supplies wireless power to a set of distributed receivers. Each receiver is connected to an electric load via a switch, where the switch connects/disconnects the load to/from the receiver. 
\begin{figure} [t]
\centering
\includegraphics[width=12cm]{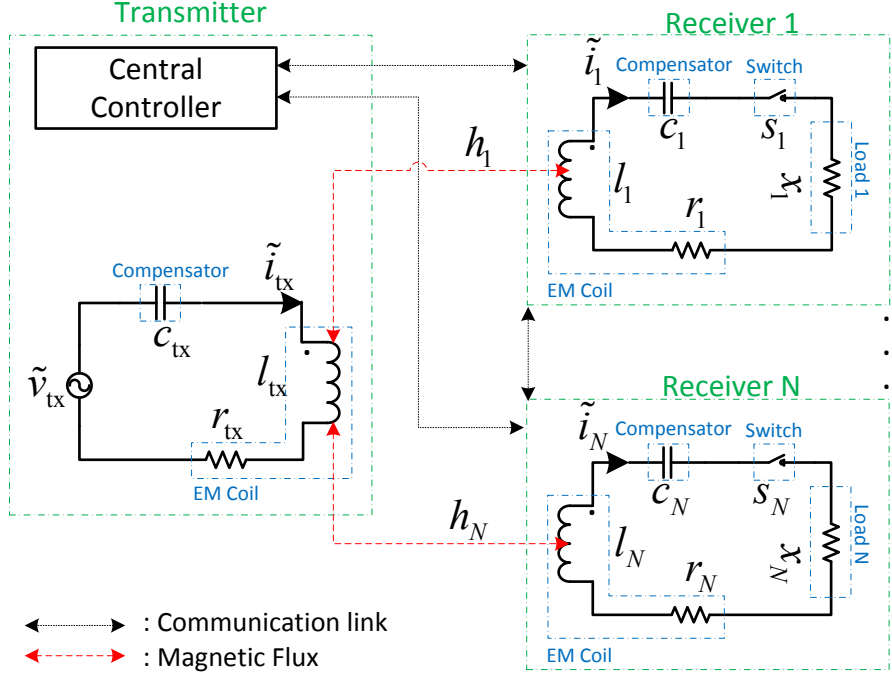} 
\caption{A point-to-multipoint magnetic resonant coupling enabled wireless power transfer system  with communication and control.} 
\label{fig:ElecCirtuit} 
\end{figure} 
We aim to apply signal processing and optimization techniques to the performance characterization and optimization in multiuser MRC-WPT systems. First, by extending the results in  \cite{Yoon,K-Lee, Ahn, Garnica, Johari,JD}, we  derive  \textit{closed-form} expressions for the   power drawn from the energy source at the transmitter and that  delivered to the load at each receiver, in terms of  mutual inductance among the transmitter and receivers as well as their circuit  parameters, for arbitrary number of receivers. 
Our obtained results reveal  a  \textit{near-far} fairness issue in  multiuser wireless power transmission, similar to its counterpart phenomenon in multiuser wireless communication. 
Specifically,  a receiver that is far  from the transmitter and thus has a small mutual inductance with the transmitter   receives  lower power as compared to a receiver that is closer to the transmitter, with other circuit parameters given identical. 
Next, we propose a method  to mitigate  the near-far issue  by jointly designing the load resistance of all receivers to control their received power by exploiting the mutual  coupling effect in the MRC-WPT system. This is analogous and yet in sharp contrast  to the method of adjusting antenna weights at the transmitter  to control the received power at different receivers in the existing far-field microwave or radio frequency (RF)  transmission enabled WPT \cite{ZHANG,Xu}. 

In particular, we consider the scenario where  a central controller is equipped at the transmitter to coordinate the multiuser power charging, by assuming that it has the full knowledge of all receivers, including their circuit parameters and power requirements. The central controller \textit{jointly} designs the adjustable load resistance of all receivers to minimize the total power consumed at the transmitter subject to the given minimum  power requirement of each load. 
For ease of practical implementation, we also consider the scenario without any central controller installed and devise a \textit{distributed} algorithm for multiuser charging control by  adjusting the loads' resistance at their individual receivers in an iterative manner. In our proposed distributed algorithm, each receiver sets its load resistance independently based on its local information and a one-bit feedback broadcasted  by each of the other receivers. The feedback of each receiver indicates whether the received power of its load exceeds the required minimum  power level or not. It is shown via simulations   that the distributed algorithm achieves performance fairly close to  the optimal solution by  the centralized algorithm with a finite number of iterations. 

Last,  we characterize the \textit{power region} for multiuser power transfer which constitutes all the achievable power-tuples for the receiver loads via controlling  their adjustable  resistance in given ranges. Specifically, we introduce the \textit{time-sharing} based multiuser power transfer, where the transmission is divided into orthogonal time slots and within each time slot only a selected subset of receivers are scheduled to  receive power, while the other receivers are disconnected from their loads. This is aimed to more flexibly  control the mutual coupling effect between the transmitter and receivers  in WPT.   
It is shown that time sharing can enlarge  the power region over the case without time sharing in general.  It is also shown that time sharing  can further mitigate the near-fare issue in multiuser WPT  by allocating more time  to  receivers that are more far-away from the transmitter.
Furthermore, we extend the  centralized multiuser charging control algorithm for the case without time sharing to jointly optimize the time allocation and  load resistance for all the receivers in the case with time sharing, to further reduce the transmitter power consumption under the same average power requirement of each load. 

The rest of this paper is organized as follows. Section II
introduces  the system model. 
Section III presents our  analytical  results. 
Section IV presents both the centralized and distributed multiuser power charging control algorithms.  
Section V  characterizes and compares the power regions without versus with time sharing. Finally, we conclude the paper in Section VI.
\section{System model} \label{sec:intro} 
As shown in Fig. \ref{fig:ElecCirtuit}, we consider an  MRC-WPT system with  a single transmitter and $N\ge1$ receivers, indexed by $n$, $ n \in {\cal N}=\{1,\ldots,N\}$.  
The transmitter and receivers are equipped with EM coils for realizing wireless power transfer, while an embedded communication system is assumed to enable information exchange among them.\footnote{As an example, the alliance for wireless power (A4WP) specification \cite{Nadakuduti} uses a  low energy  profile Bluetooth network at the band of $2.4$GHz for communication and  system control, which is aimed to schedule the  charging sequence of receivers and also control their individual charging power   according to the given  priorities.} 
The transmitter is connected to a stable energy source supplying sinusoidal voltage over time given by  $\tilde{v}_{\text{tx}}(t)=\real\{ {v}_{\text{tx}} e^{j w t} \}$,  with  ${v}_{\text{tx}}$ denoting a complex voltage which
is assumed to be constant, and $w>0$ denoting its   operating angular frequency.  
Each receiver $n$ is connected via a switch  to a given electric load (e.g.,  battery charger),  named   load $n$,  with adjustable resistance  $x_{n} > 0$. The switch is used to  connect/disconnect each  load to/from its corresponding receiver.
The state of switch at each receiver $n$ is given by $s_n \in \{0,~1\}$, where $s_n=1$ and $s_n=0$ denote the switch is closed and open, respectively.
It is also assumed  that the transmitter and each receiver $n$ are compensated  using series capacitors with capacities  $c_{\text{tx}} > 0$ and $c_{n} > 0$, respectively. 

Let $\tilde{i}_{\text{tx}}(t)=\real\{i_{\text{tx}}e^{j w t}\}$,  with complex-valued   ${i}_{\text{tx}} $, denote the steady state current  flowing through the transmitter. This current produces a time-varying magnetic flux in the transmitter's EM  coil, which passes through the EM  coils of nearby receivers  and induces time-varying currents in them. 
We  denote  $\tilde{i}_{n}(t)=\real\{i_{n}e^{j w t}\}$,  with complex-valued  ${i}_{n} $, as the steady state current at receiver $n$. It is worth pointing out that  the magnetic flux is the main medium of wireless power transfer considered in this paper, while the electric field is evanescent and thus is ignored \cite{Kurs}.   This is in contrast to the RF based far-field WPT \cite{ZHANG,Xu}, where the synchronized oscillations of magnetic and electric fields radiate  energy in the form of EM waves propagating through  the air.

We denote  $r_{\text{tx}}>0$ ($r_{n}>0$)  and $l_{\text{tx}}>0$ ($l_{n}>0$)  as the internal resistance and the self-inductance of the EM coil of the transmitter (receiver $n$), respectively. 
We also denote the mutual inductance between EM coils of the transmitter and each receiver $n$ by a real number $h_{n}$, with  $|h_{n}|  \le  \sqrt{l_nl_{\text{tx}}}$, where its actual value depends on the physical characteristics of the two EM coils, their locations, alignment (or misalignment) of  oriented axes with respect to each other, the environment magnetic permeability, etc.
For example, the mutual inductance of two coaxial circular loops that lie in the parallel planes with separating distance of $d$ meter is shown to be proportional to $d^{-3}$  in \cite{Cheng}.  
Moreover, since the receivers usually employ smaller EM coils than that of the transmitter due to practical size limitation and they are also physically separated, we ignore the mutual inductance between any pair of the receivers for simplicity. 

The equivalent electric circuit model of the considered MRC-WPT system is  also shown in Fig. \ref{fig:ElecCirtuit},
in which the natural angular frequencies of the transmitter and  each receiver $n$ can be expressed as $w_{\text{natural},\text{tx}}={1}/{\sqrt{l_{\text{tx}} c_{\text{tx}}}}$ and $w_{\text{natural},n}={1}/{\sqrt{l_{n} c_{n}}}$, respectively. 
We thus set the capacities of compensators' capacitors  as
\begin{align}
c_{\text{tx}}&=\dfrac{1}{l_{\text{tx}} \hspace{.6mm} w^{2}}, \label{eq:c0}\\
c_{n}&=\dfrac{1}{l_{n}\hspace{.6mm} w^{2}}, ~\forall n \in \cal N, \label{eq:cn}
\end{align} 
so  that the transmitter and all receivers have the same natural angular frequency as  the transmitter voltage source's  angular frequency $w$,  i.e.,   $w_{\text{natural},\text{tx}}=w_{\text{natural},1}=\ldots=w_{\text{natural},N}=w$. Accordingly, we name $w$ as the \textit{resonant angular frequency}.  

In this paper, we assume that the transmitter and all receivers are at fixed positions and the physical characteristics of their EM coils are \textit{a priori} known. As a result, $h_{n}$'s, $\forall n\in \cal N$, are modeled as  given constants, which are computed according to  Appendix E. In practical systems with mobile receivers,  $h_n$'s in general change over time and thus need to be measured periodically.  For example, one method that can be used in practice to estimate the mutual inductance between the transmitter and any receiver $n$, is given as follows. First, by disconnecting the loads at all other receivers $k\neq n$, under a known input voltage $v_{\text{tx}}$, the transmitter measures the power drawn from its voltage source, denoted by $p_{\text{tx}}$,  due to load $n$ only. From (\ref{eq:PT}), we can show
\begin{align}
h_n=\pm \dfrac{\sqrt{\left(\dfrac{|v_{\text{tx}}|^2}{2p_{\text{tx}}} -r_{\text{tx}} \right) \left(r_n+x_n\right)}}{w},
\end{align}
i.e., the transmitter can obtain the mutual inductance with receiver $n$ by assuming known $r_n$ and $x_n$ (which can be sent to the transmitter via one-time feedback from receiver $n$). Note that the sign of  $h_n$  can be determined by comparing the known direction of the current flowing in receiver $n$ (via a one-bit feedback from receiver $n$) with that assumed at the transmitter. If the directions are same, then the positive sign is selected for $h_n$; otherwise, the negative sign is set.

In this paper, we treat the load  resistance $x_{n}$'s, $\forall n \in \cal N$,  as design parameters,  which can be adjusted in real time  to control the performance of our considered  MRC-WPT system  based on the information shared among different nodes in the system, via the embedded communication system.
Note that an  electric load  with any  fixed resistance can be connected via a rectifier in parallel with a boost (or triboost) converter to each receiver  to realize an adjustable  resistance \cite{Pantic}. 
Specifically,   given the fixed input voltage, the on/off  time intervals of the  converter can be controlled in real time to change  the average current flowing into the load, which is equivalent to adjusting the load resistance. 
\section{Performance Analysis}  \label{sec:Performance}
In this section, we  present  new analytical results on the performance of the MRC-WPT system with arbitrary number of receivers. 
A numerical example is also  provided  to validate our analysis and draw useful insights. Here,  we assume that all receiver switches are closed, i.e., $s_n=1$, $\forall n \in \cal N$; as a result, the transmitter sends wireless power  to all loads  concurrently.
\subsection{Analytical Results}
By applying Kirchhoff's circuit laws to the electric circuit model given in Fig. \ref{fig:ElecCirtuit}, we have  
\begin{align}  
\hspace{-2mm}\left(\hspace{-.5mm}r_{\text{tx}} + j \left(\hspace{-.5mm}w l_{\text{tx}} \hspace{-.5mm}-\hspace{-.5mm}\dfrac{1}{w c_{\text{tx}}} \hspace{-.5mm}\right)  \hspace{-.5mm}\right) i_{\text{tx}}- j w \sum_{k \in \cal N} h_k i_k& = v_{\text{tx}}, \label{eq:TRANS} \\
\hspace{-2mm}\left(\hspace{-.5mm}r_n+x_n+j\left(\hspace{-.5mm}w l_n -\dfrac{1}{w c_n}\hspace{-.5mm}\right) \hspace{-.5mm}\right) i_n\hspace{-.5mm}-\hspace{-.5mm}j w h_n i_{\text{tx}}&=0, ~\hspace{-.5mm}\forall n \in \cal N.\hspace{-.5mm} \label{eq:RECIV}
\end{align}
From (\ref{eq:c0}) and (\ref{eq:cn}), we can set $w l_{\text{tx}} -{1}/{(w c_{\text{tx}})}=0$ and $w l_n -{1}/{(w c_n)}=0$ in (\ref{eq:TRANS}) and (\ref{eq:RECIV}), respectively. This is due to the fact that the transmitter and all receivers  are designed to resonate at the same angular frequency $w$.
By solving the set of linear equations given in (\ref{eq:TRANS}) and (\ref{eq:RECIV}), we can derive $i_{\text{tx}}$ and $i_n$'s as functions of  the input voltage $v_{\text{tx}}$ as  follows:
\begin{align} 
i_{\text{tx}}&=~\dfrac{1}{r_{\text{tx}}+ w^2 \sum_{k \in \cal N} h_{k}^2\left(r_{k}+x_{k}\right)^{-1} }v_{\text{tx}}, \label{eq:IT1m} \\
i_{n}&=\hspace{-0.2mm} j \hspace{-0.2mm}\dfrac{w h_n\left(r_{n}+x_{n}\right)^{-1} }{r_{\text{tx}}+ w^2 \sum_{k \in \cal N} h_{k}^2\left(r_{k}+x_{k}\right)^{-1} } v_{\text{tx}}, ~ \forall n \in \cal N. \label{eq:ILm}
\end{align}
The power drawn from the energy source at the transmitter, i.e.,  $p_{\text{tx}}$, and that delivered to each load $n$,  denoted by $p_{n}$, are then obtained  as
\begin{align} \label{eq:PT}
p_{\text{tx}}&= \dfrac{1}{2}\real \left\{v_{\text{tx}} i_{\text{tx}}^{*}\right\}
=\dfrac{|v_{\text{tx}}|^2}{2} \dfrac{1}{r_{\text{tx}}+ w^2 \sum_{k \in \cal N} h_{k}^2\left(r_{k}+x_{k}\right)^{-1}}, \\
p_{n}&=\dfrac{1}{2} x_{n} \left|i_{n}\right|^2 \label{eq:PL} =\dfrac{|v_{\text{tx}}|^2}{2} \dfrac{w^2 h_n^2 x_{n} \left(r_{n}+x_{n}\right)^{-2}}{\left(r_{\text{tx}}+ w^2 \sum_{k \in \cal N} h_{k}^2\left(r_{k}+x_{k}\right)^{-1} \right)^2},
\end{align}
where $i_{\text{tx}}^{*}$ denotes the conjugate of $i_{\text{tx}}$. From (\ref{eq:PL}), it follows that the power delivered to each load $n$ increases with the mutual inductance between EM coils of its receiver and the transmitter, i.e., $h_n$. 
This can cause a \textit{near-far} fairness issue since a receiver that is far  from the transmitter generally  has a small mutual inductance with the transmitter; as a result, its received power is lower than that at a receiver that is closer to the transmitter  (thus has a larger mutual inductance). 
Furthermore, we define  $p_{\text{sum}}=\sum_{k=1}^{N}p_{k}$  as the \textit{sum-power} delivered to all loads, where it can be verified from (\ref{eq:PT}) and (\ref{eq:PL}) that $p_{\text{sum}}< p_{\text{tx}}$.  
The  sum-power \textit{transfer efficiency}, denoted by $0\le \rho < 1$, is thus expressed as
\begin{align} \label{eq:eta}
\rho
=\dfrac{p_{\text{sum}}}{p_{\text{tx}}}= \dfrac{ w^2\sum_{k\in \cal N} h_k^2 \hspace{.3mm} x_{k} \hspace{.3mm}\left(r_{k}+x_{k}\right)^{-2}} {r_{\text{tx}}+ w^2 \sum_{k \in \cal N} h_{k}^2\left(r_{k}+x_{k}\right)^{-1}}.
\end{align}

\begin{remark} \label{Remark:1}
When  the receivers are all weakly coupled to the transmitter, e.g.,  they are sufficiently far away from the transmitter, we have  $h_n \rightarrow 0$, $\forall n \in \cal N$. In this regime,  from (\ref{eq:PT}),  it follows that the transmitter power is $p_{\text{tx}}\approx |v_{\text{tx}}|^2/(2 r_{\text{tx}})$, which is a function of  the resistance and  voltage of the transmitter only. 
On the other hand, from (\ref{eq:PL}), it follows that the power delivered to each load $n$ is $p_{n} \approx |v_{\text{tx}}|^2 w^2 h_n^2 x_n(r_n+x_n)^{-2}/ (2 r_{\text{tx}}^2)$, which is  irrespective  of the other receivers' mutual inductance and resistance.
The above results  can be explained as follows. With  $h_n \rightarrow 0$, $\forall n \in \cal N$, the power transfered to the receivers is small and thus can be neglected as compare to  the power loss due to the  transmitter's resistance. As a result,  we have   $p_{\text{tx}} \approx r_{\text{tx}}|i_{\text{tx}}|^2/2$, with   $i_{\text{tx}}=v_{\text{tx}}/r_{\text{tx}}$.   
It also can be  verified that  with $h_n \rightarrow 0$, $\forall n \in \cal N$,   the coupling effect among the receivers through the transmitter current $i_{\text{tx}}$ is negligible. 
Hence, the power delivered to the load at  each receiver is independent of other receivers (similar to the far-field RF based WPT \cite{ZHANG,Xu}).
\end{remark}
\begin{remark} \label{Remark:w}
It can be shown from (\ref{eq:PL}) that $p_n$, $\forall n \in \cal N$, first increases over $0<w<\dot{w}$, and then decreases over $w>\dot{w}$,  where 
\begin{align} \label{eq:w_Star}
\dot{w}=\sqrt{\dfrac{r_{\text{tx}}}{\sum_{k\in \cal N}h_k^2\left(r_k+x_k\right)^{-1}} } .
\end{align}
The above result can be explained as follows. From (\ref{eq:ILm}), it follows that  the magnitude of the current flowing in each receiver $n$, i.e., $|i_n|$, strictly increases  over $0<w<\dot{w}$, but  strictly decreases  over $w>\dot{w}$. This yields that $w=\dot{w}$ is the unique maximizer of $|i_n|$ over $w>0$. 
Obviously, $p_n$, which is defined in (\ref{eq:PL}) as $p_n=x_n |i_n|^2/2$,  behaves  same as $|i_n|$ over $w>0$. Although $w$ is assumed to be fixed in this paper,  it can also be optimally set to maximize the system power transfer efficiency, if this is implementable in practice. Furthermore, from (\ref{eq:w_Star}), it follows that $\dot{w}$ depends on the distances between the transmitter and receivers, since $h_n$'s in general  decrease with larger distances.
\end{remark}

Next, we study the effect of  changing the load resistance of one particular receiver $n$, i.e., $x_n$, on  the transmitter power  $p_{\text{tx}}$, its received power $p_n$, the power delivered to each of the other loads $m \in \cal N$, $m\neq n$, i.e., $p_m$,  the sum-power delivered to all loads $p_{\text{sum}}$, and the sum-power transfer efficiency $\rho$,  assuming  that  all the other loads' resistance  is  fixed. 
\begin{proposition} \label{Prop:3}
$p_{\textnormal{tx}}$ strictly increases over $x_n>0$.
\end{proposition}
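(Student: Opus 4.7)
The proof is a direct monotonicity argument on the closed-form expression in (\ref{eq:PT}), so my plan is to view $p_{\textnormal{tx}}$ as a function of the single variable $x_n$ (with all other load resistances and circuit parameters held fixed) and show that it is strictly increasing by differentiation.

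First, I would rewrite (\ref{eq:PT}) as $p_{\textnormal{tx}} = \frac{|v_{\textnormal{tx}}|^2}{2 D(x_n)}$, where the denominator is
\begin{align*}
D(x_n) = r_{\textnormal{tx}} + w^2 \sum_{k \in \mathcal{N}} \frac{h_k^2}{r_k + x_k},
\end{align*}
and observe that only the $k=n$ summand depends on $x_n$. Since $r_n > 0$ and $x_n > 0$, the function $x_n \mapsto h_n^2/(r_n+x_n)$ is well-defined and strictly decreasing in $x_n$ (assuming the non-degenerate coupling case $h_n \neq 0$, which is the setting of interest; otherwise $p_{\textnormal{tx}}$ is trivially independent of $x_n$). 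Explicitly,
\begin{align*}
\frac{\partial D(x_n)}{\partial x_n} = -\frac{w^2 h_n^2}{(r_n+x_n)^2} < 0.
\end{align*}

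Hence $D(x_n)$ is strictly decreasing in $x_n$ on $(0,\infty)$, and consequently
\begin{align*}
\frac{\partial p_{\textnormal{tx}}}{\partial x_n} = -\frac{|v_{\textnormal{tx}}|^2}{2 D(x_n)^2}\cdot\frac{\partial D(x_n)}{\partial x_n} = \frac{|v_{\textnormal{tx}}|^2 w^2 h_n^2}{2 D(x_n)^2 (r_n+x_n)^2} > 0,
\end{align*}
which establishes that $p_{\textnormal{tx}}$ is strictly increasing in $x_n$ as claimed.

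There is really no hard step here: the closed-form expression does all the work, and the only subtlety worth flagging is the degenerate case $h_n = 0$, in which receiver $n$ is entirely decoupled from the transmitter so the transmitter current (\ref{eq:IT1m}) — and therefore $p_{\textnormal{tx}}$ — does not depend on $x_n$ at all. The intuition behind the monotonicity is also worth noting briefly: increasing $x_n$ raises the effective impedance seen by the induced current in coil $n$, which reduces $|i_n|$ and hence the reflected impedance this receiver contributes back to the transmitter loop; with a lower reflected impedance, the transmitter source drives a larger current $i_{\textnormal{tx}}$ and therefore draws more power from the voltage source.
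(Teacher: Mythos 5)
Your proof is correct and is essentially identical to the paper's: Appendix A likewise differentiates (\ref{eq:PT}) with respect to $x_n$ and arrives at the same expression $\frac{\partial p_{\text{tx}}}{\partial x_n}=\frac{|v_{\text{tx}}|^2}{2}\,\frac{w^2 h_n^2 (r_n+x_n)^{-2}}{\left(r_{\text{tx}}+w^2\sum_{k}h_k^2(r_k+x_k)^{-1}\right)^2}>0$. Your explicit flagging of the degenerate case $h_n=0$ (where strictness fails) is a small but legitimate refinement that the paper leaves implicit.
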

\begin{proof}
Please see Appendix A.
\end{proof}

This result  can be explained as follows.   From  (\ref{eq:IT1m}), it is  observed that the transmitter current $|i_{\text{tx}}|$  strictly increases over  $x_n>0$. Since the energy source voltage  $v_{\text{tx}}$  is  fixed, it follows that $p_{\text{tx}}$ given  in (\ref{eq:PT})  strictly increases over $x_n>0$. 

\begin{proposition} \label{Prop:4}
$p_m$,  $\forall m \neq n$,  strictly increases  over  $x_n >0$. However,  $p_n$  first increases  over $0<x_n< \dot{x}_n$, and then decreases over $x_n > \dot{x}_n$, where  
\begin{align} \label{eq:X_Star}
\dot{x}_n=\dfrac{r_n\left(r_{\textnormal{tx}}+\phi_n\right)+w^2 h_n^2}{r_{\textnormal{tx}}+\phi_n},
\end{align}
with  
$\phi_n=w^2 \sum_{k\in {\cal N}\setminus\{n\}} h_k^2(r_k+x_k)^{-1}$.
\end{proposition}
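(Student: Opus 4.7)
The plan is to split the dependence on $x_n$ from the remaining (fixed) summands and then do a straightforward calculus exercise. Introduce
\begin{align*}
\phi_n = w^{2}\sum_{k\in{\cal N}\setminus\{n\}} h_k^{2}(r_k+x_k)^{-1},
\qquad a = r_{\text{tx}} + \phi_n, \qquad b = w^{2}h_n^{2},
\end{align*}
so that $a$ and $b$ are constants with respect to $x_n$, while the common denominator appearing in both (\ref{eq:PT}) and (\ref{eq:PL}) becomes $a + b(r_n+x_n)^{-1}$.

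For $m\neq n$, equation (\ref{eq:PL}) gives
\begin{align*}
p_m = \frac{|v_{\text{tx}}|^{2}}{2}\cdot\frac{w^{2}h_m^{2} x_m (r_m+x_m)^{-2}}{\bigl(a + b(r_n+x_n)^{-1}\bigr)^{2}}.
\end{align*}
The numerator is independent of $x_n$, and since $b(r_n+x_n)^{-1}$ is strictly decreasing in $x_n>0$, the denominator is strictly decreasing. This immediately yields the first claim that $p_m$ strictly increases in $x_n$.

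For $p_n$ itself, I would first clear the nested reciprocal by multiplying numerator and denominator by $(r_n+x_n)^{2}$, obtaining the compact form
\begin{align*}
p_n = \frac{|v_{\text{tx}}|^{2}\, b}{2}\cdot\frac{x_n}{\bigl(a(r_n+x_n)+b\bigr)^{2}}.
\end{align*}
Differentiating the $x_n$-dependent factor and using the quotient rule, the sign of $dp_n/dx_n$ is determined by $a(r_n+x_n)+b - 2a x_n = a r_n + b - a x_n$. Solving $a r_n + b - a x_n = 0$ recovers exactly
\begin{align*}
\dot x_n = r_n + \frac{b}{a} = \frac{r_n(r_{\text{tx}}+\phi_n)+w^{2}h_n^{2}}{r_{\text{tx}}+\phi_n},
\end{align*}
which matches (\ref{eq:X_Star}). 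Since the sign expression $a r_n + b - a x_n$ is strictly decreasing in $x_n$, it is positive for $x_n<\dot x_n$ and negative for $x_n>\dot x_n$, so $p_n$ strictly increases on $(0,\dot x_n)$ and strictly decreases on $(\dot x_n,\infty)$, as claimed.

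There is no real obstacle here; the only thing to watch is the algebraic simplification that eliminates $(r_n+x_n)^{-1}$ from inside the squared denominator, because it is what makes the critical-point equation linear in $x_n$ and produces a clean closed-form $\dot x_n$. Once that step is performed, the monotonicity of $p_m$ and the unimodality of $p_n$ follow directly from sign analysis, and no further structural argument is required.
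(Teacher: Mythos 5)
Your proof is correct and follows essentially the same route as the paper: both establish monotonicity of $p_m$ from the strictly decreasing denominator and locate the peak of $p_n$ by a derivative sign analysis, with your sign-determining quantity $a r_n + b - a x_n$ being exactly the paper's factor $w^2 h_n^2 + (r_{\textnormal{tx}}+\phi_n)(r_n - x_n)$. The only difference is presentational—you clear the reciprocal before differentiating, which makes the linearity of the critical-point equation more transparent, but the argument is the same.
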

\begin{proof}
Please see Appendix B.
\end{proof}

The above result can be explained as follows.   From  (\ref{eq:ILm}), it follows that  for each receiver $m$, $m \neq n$, its current $|i_m|$ strictly increases over  $x_n>0$.    
The received power  $p_m$ defined in (\ref{eq:PL}) thus strictly increases over $x_n>0$. On the other hand, it follows from (\ref{eq:ILm}) that for receiver $n$,  its current $|i_n|$  strictly decreases  over $x_n>0$. However, from (\ref{eq:PL}), it follows that the decrement in $|i_{n}|^2$ is smaller than the increment of $x_n$ when $0< x_n <\dot{x}_n$; therefore, $p_n$ increases over $x_n$ in this region. The opposite is  true when $x_n > \dot{x}_n$.

\begin{proposition} \label{Prop:5}
If $r_{\textnormal{tx}}+\phi_n-2\varphi_n \le 0$,    $p_\textnormal{sum}$ strictly increases over $x_n>0$, where 
$\varphi_n=w^2\sum_{k\in {\cal N}\setminus\{n\}} h_k^2 x_k (r_k+x_k)^{-2}$; 
otherwise,   $p_\textnormal{sum}$  first increases over   $0<x_n<\ddot{x}_n$,  and then decreases over $x_n>\ddot{x}_n$, where  
\begin{align}
\ddot{x}_n=\dfrac{r_n\left(r_{\textnormal{tx}}+\phi_n\right)+w^2h_n^2+2r_n \varphi_n}{r_{\textnormal{tx}}+\phi_n-2\varphi_n}. 
\end{align}
\end{proposition}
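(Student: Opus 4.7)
The plan is to compute $dp_{\text{sum}}/dx_n$ and analyze the sign of its numerator as a quadratic in $x_n$. Using (\ref{eq:PL}) and separating out the term with index $n$, I would write
\[
p_{\text{sum}}(x_n) = \frac{|v_{\text{tx}}|^2}{2} \cdot \frac{\varphi_n + w^2 h_n^2 \, x_n/(r_n+x_n)^2}{\bigl(r_{\text{tx}} + \phi_n + w^2 h_n^2/(r_n+x_n)\bigr)^2},
\]
where $\phi_n$ and $\varphi_n$ are treated as constants in $x_n$. Letting $A(x_n)$ be the numerator and $B(x_n)$ the base of the squared denominator, the sign of $dp_{\text{sum}}/dx_n$ agrees with that of $A'(x_n) B(x_n) - 2 A(x_n) B'(x_n)$.

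The next step is the key algebraic reduction. Computing $A'(x_n) = w^2 h_n^2 (r_n - x_n)/(r_n+x_n)^3$ and $B'(x_n) = -w^2 h_n^2/(r_n+x_n)^2$, multiplying out, and clearing the common positive factor $w^2 h_n^2/(r_n+x_n)^4$, I expect the sign-determining expression to collapse to the quadratic
\[
Q(x_n) = -(r_{\text{tx}} + \phi_n - 2\varphi_n)\, x_n^2 + (w^2 h_n^2 + 4 r_n \varphi_n)\, x_n + r_n\bigl[r_n(r_{\text{tx}}+\phi_n) + w^2 h_n^2 + 2 r_n \varphi_n\bigr].
\]

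With $Q(x_n)$ in hand the case split becomes transparent. If $r_{\text{tx}} + \phi_n - 2\varphi_n \le 0$, the leading coefficient of $Q$ is non-negative while the linear and constant coefficients are strictly positive, so $Q(x_n) > 0$ for every $x_n > 0$ and hence $p_{\text{sum}}$ is strictly increasing. Otherwise $Q$ is a downward parabola with a strictly positive constant term, so it has exactly one positive root; either substituting the claimed expression directly, or using Vieta's formulas to identify the two roots as $\ddot{x}_n$ and $-r_n$, verifies that the positive root equals the formula in the statement, and the sign of $Q$ switches from positive to negative there, yielding the stated unimodal behavior of $p_{\text{sum}}$.

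The main obstacle I anticipate is purely the algebraic bookkeeping of the reduction to $Q(x_n)$: every cross term must be gathered correctly so that the coefficient of $x_n^2$ emerges as exactly $-(r_{\text{tx}} + \phi_n - 2\varphi_n)$, since the dichotomy in the proposition hinges on the sign of this quantity. Once that coefficient is confirmed, verifying $\ddot{x}_n$ by direct substitution (rather than invoking the quadratic formula) keeps the final expression clean and matches the form given in the statement. This mirrors the derivative-based strategy used in Propositions \ref{Prop:3} and \ref{Prop:4}, so the proof would naturally be placed in an appendix parallel to Appendices A and B.
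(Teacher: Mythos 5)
Your proposal is correct and follows essentially the same derivative-sign analysis as the paper, whose proof simply sums the per-load derivatives $\partial p_k/\partial x_n$ already computed in the proof of Proposition \ref{Prop:4} and omits the remaining algebra. The only cosmetic difference is that your quadratic $Q(x_n)$ equals $(x_n+r_n)$ times the linear-in-$x_n$ bracket that emerges from that summation (hence the extraneous root $-r_n$ you identify via Vieta), so both routes yield the same single sign change at $\ddot{x}_n$ and the same dichotomy on the sign of $r_{\textnormal{tx}}+\phi_n-2\varphi_n$.
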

\begin{proof}
Please see Appendix C.
\end{proof}

This result is a consequence of Proposition  \ref{Prop:4}, from which it is known  that  $p_m$'s, $m\neq n$, strictly increase over $x_n>0$, while $p_n$ first increases over $0<x_n<\dot{x}_n$ and then decreases over $x_n> \dot{x}_n$. The sum-power $p_{\text{sum}}$ can thus behave similarly as either $p_m$'s (monotonically increasing) or $p_n$ (initially increasing and then decreasing) over $x_n>0$, depending on the system parameters.

\begin{proposition} \label{Prop:6}
If $\varphi_n-\phi_n-r_{\text{tx}} \ge 0$, $\rho$ strictly increases over $x_n>0$; otherwise, $\rho$ first increases over $0<x_n<\dddot{x}_n$,  and then decreases over $x_n>\dddot{x}$,  where 
\begin{align} \label{eq:xmDag}
\dddot{x}=\dfrac{-r_n\varphi_n-\sqrt{r_n^2\varphi_n^2-\Gamma_n}}{\varphi_n-\phi_n-r_{\text{tx}}},
\end{align}
with $\Gamma_n=(\varphi_n-\phi_n-r_{\text{tx}}) (r_n^2(r_{\text{tx}}+\varphi_n+\phi_n)+r_n w^2h_n^2 )$.
\end{proposition}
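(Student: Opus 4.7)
The plan is to show that $\operatorname{sign}(\partial \rho / \partial x_n)$ reduces to the sign of a quadratic polynomial in $x_n$, whose behavior is controlled entirely by its leading coefficient. The hypothesis $\varphi_n - \phi_n - r_{\textnormal{tx}} \ge 0$ in the statement is exactly the condition that this leading coefficient be non-negative.

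First, I would isolate the $x_n$-dependence in $\rho$ by splitting off the terms with index $n$ in both sums of \eqref{eq:eta}. Using the $\phi_n,\varphi_n$ defined in Propositions \ref{Prop:5} and \ref{Prop:6}, which do not depend on $x_n$, write $\rho(x_n)=A(x_n)/B(x_n)$ with
\begin{align*}
A(x_n) &= \varphi_n + w^2 h_n^2\, x_n (r_n+x_n)^{-2},\\
B(x_n) &= r_{\textnormal{tx}} + \phi_n + w^2 h_n^2 (r_n+x_n)^{-1}.
\end{align*}
A direct computation gives $A'(x_n)=w^2 h_n^2 (r_n-x_n)(r_n+x_n)^{-3}$ and $B'(x_n)=-w^2 h_n^2 (r_n+x_n)^{-2}$. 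Then $\operatorname{sign}(\rho'(x_n))=\operatorname{sign}(A'B-AB')$, and factoring out the positive quantity $w^2 h_n^2 (r_n+x_n)^{-3}$ and then multiplying by the positive factor $(r_n+x_n)$ shows that $\operatorname{sign}(\rho'(x_n))$ equals the sign of
\begin{align*}
f(x_n) := (r_n^2-x_n^2)(r_{\textnormal{tx}}+\phi_n) + (r_n+x_n)^2\varphi_n + r_n w^2 h_n^2.
\end{align*}
Expanding and collecting powers of $x_n$ yields the quadratic
\begin{align*}
f(x_n) = (\varphi_n-\phi_n-r_{\textnormal{tx}})\,x_n^2 + 2r_n\varphi_n\,x_n + r_n^2(r_{\textnormal{tx}}+\phi_n+\varphi_n) + r_n w^2 h_n^2,
\end{align*}
with $f(0)>0$ and non-negative coefficients on $x_n$ and $x_n^0$.

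Second, I would split on the sign of the leading coefficient. If $\varphi_n-\phi_n-r_{\textnormal{tx}}\ge 0$, every term of $f$ is non-negative for $x_n>0$ and the constant term is strictly positive, so $f(x_n)>0$ and $\rho$ is strictly increasing. Otherwise $f$ is a concave-down quadratic that is strictly positive at the origin and tends to $-\infty$, so it has a unique positive root, which is the required transition point $\dddot{x}_n$: $\rho$ increases on $(0,\dddot{x}_n)$ and decreases on $(\dddot{x}_n,\infty)$. Applying the quadratic formula to $f(x_n)=0$ gives $x_n=(-r_n\varphi_n\pm\sqrt{r_n^2\varphi_n^2-\Gamma_n})/(\varphi_n-\phi_n-r_{\textnormal{tx}})$, where $\Gamma_n$ matches the expression in the statement.

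The main (minor) obstacle is choosing the correct branch of the square root: in case two the denominator $\varphi_n-\phi_n-r_{\textnormal{tx}}$ is strictly negative, and since $\Gamma_n$ is then also negative one has $\sqrt{r_n^2\varphi_n^2-\Gamma_n}>r_n\varphi_n$, so the numerator is positive with the $+\sqrt{\cdot}$ branch and negative with the $-\sqrt{\cdot}$ branch. Only the latter yields a positive root, giving precisely \eqref{eq:xmDag}. The rest of the proof is purely algebraic manipulation of $A'B-AB'$, which is tedious but mechanical.
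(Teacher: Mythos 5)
Your proof is correct and follows essentially the same route as the paper's Appendix D: both reduce $\operatorname{sign}(\partial\rho/\partial x_n)$ to the sign of the quadratic $(\varphi_n-\phi_n-r_{\text{tx}})x_n^2+2r_n\varphi_n x_n+r_nw^2h_n^2+r_n^2(\varphi_n+\phi_n+r_{\text{tx}})$ and then case-split on its leading coefficient. Your write-up is actually more explicit than the paper's on two points it glosses over --- the quotient-rule bookkeeping via $A$ and $B$, and the selection of the $-\sqrt{\cdot}$ branch as the unique positive root --- and it correctly omits the spurious $|v_{\text{tx}}|^2/2$ prefactor appearing in the paper's (\ref{eq:Proof-rh}).
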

\begin{proof}
Please see Appendix D.
\end{proof}

This result is a consequence of Propositions \ref{Prop:3} and \ref{Prop:5}, due to the different behaviors of 
$p_{\text{tx}}$ and $p_{\text{sum}}$ over $x_n>0$.
\subsection{Validation of Analysis}
\label{subsec:NumericalExample-Performnce}
\begin{figure} [t!]
\centering
\includegraphics[width=12cm]{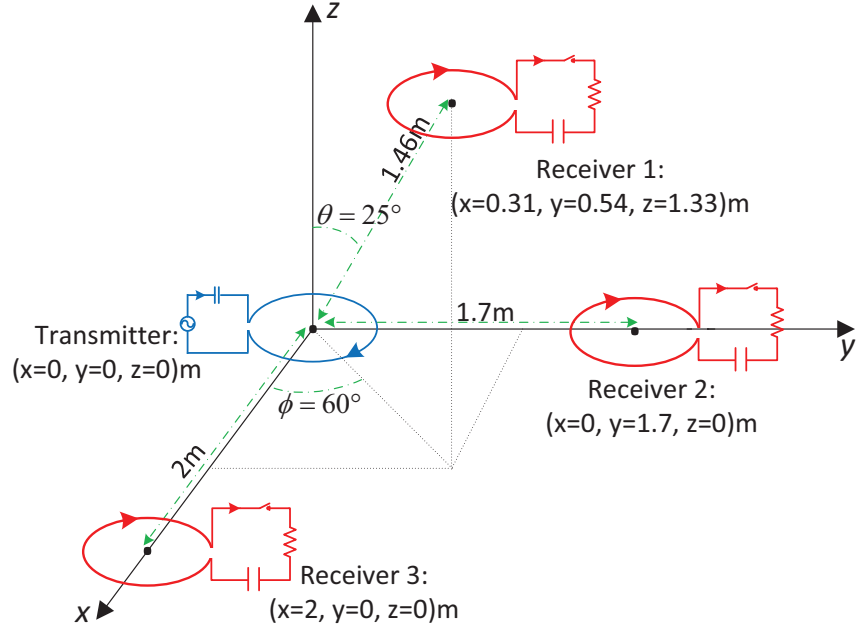} 
\caption{The considered system setup for numerical examples.}
\label{fig:System-Setup} 
 \end{figure}
 
For the purpose of exposition, we consider a point-to-multipoint MRC-WPT system with  $N=3$ receivers, as shown in Fig. \ref{fig:System-Setup}, where  the transmitter and receivers use circular EM coils (see Fig. \ref{fig:Coil-Example} in Appendix \ref{App:Paramters}),   with the physical characteristics given in Table \ref{tab:Tab-1}. Note that the transmitter and both receivers $2$ and $3$ lie in the plane with $z=0$, while receiver $1$ lies in the plane with $z=0.91$ meter (m). Accordingly, the internal resistance and self-inductance of individual EM coils  as well as the mutual inductance among them can be derived (see the details in  Appendix \ref{App:Paramters}), where the obtained values are   given in Table \ref{tab:addlabel}. In this example, although all receivers use EM coils with the same physical characteristics, they are located in different distances from the transmitter.
Specifically, receiver $1$ is closest to the transmitter and  thus has the largest mutual inductance with the transmitter, while receiver $3$ is farthest and has the smallest mutual inductance.
\begin{table}[t!]
	\centering
	\caption{Physical characteristics of EM coils}
	\begin{tabular}{|c|N|N|N|N|N|D|}
		\hline
		\vspace{.1mm}EM Coil &  Inner radius  (cm) &  Outer radius  (cm) & Average radius (cm) & Number of turns & Material of wire& Resistivity of wire ($\mu \Omega$/m) \bigstrut\\ \hline
		\hspace{-1.8mm}Transmitter &     $19.9$   &     $20.1$ & $20$ &  $200$     &    Copper   &  $0.0168$ \bigstrut\\ \hline
		\hspace{-1mm}Receiver $1$ &       $4.95$  &   $5.05$   & $5$ &    $10$  &   Copper    & $0.0168$   \bigstrut\\ \hline
		\hspace{-1mm}Receiver $2$ &   $4.95$   &    $5.05$  & $5$ & $10$      &    Copper  &  $0.0168$  \bigstrut\\ \hline
		\hspace{-1mm}Receiver $3$ &   $4.95$   &     $5.05$ & $5$ &  $10$     &    Copper   &  $0.0168$ \bigstrut\\ \hline
	\end{tabular}
	\label{tab:Tab-1}
\end{table}
\begin{table}[t!]
	\vspace{2mm}
	\centering
	\caption{Electrical characteristics of EM coils}
	\begin{tabular}{|c|S|S|S|S|}
		\hline
		\vspace{.1mm}EM Coil &  Internal resistance $r_{\text{tx}}$/$r_n$ ($\Omega$) & Self-inductance $l_{\text{tx}}$/$l_n$ (mH) & Mutual inductance $h_n$ ($\mu$H)  \bigstrut\\ \hline
		\hspace{-1.8mm}Transmitter & $1.3440$ &  $54.0630$  &  --  \bigstrut\\ \hline
		\hspace{-1mm}Receiver $1$ &  $0.0672$ &  $0.0294$ & $\hspace{-1.1mm}-0.0921$ \bigstrut\\ \hline
		\hspace{-1mm}Receiver $2$ & $0.0672$  &  $0.0294$   & $~0.0402$  \bigstrut\\ \hline
		\hspace{-1mm}Receiver $3$ &  $0.0672$ & $0.0294$ &  $~0.0245$  \bigstrut\\ \hline
	\end{tabular}
	\label{tab:addlabel}
\end{table}
We set $v_{\text{tx}}=20 \sqrt{2}$V, and  $w=42.6 \times10^6$rad/s (i.e., $6.78 $MHz), as suggested in the A4WP specification \cite{Nadakuduti}. For this example, we fix $x_{2}=x_{3}=2.5\Omega$. 

First, we plot   $p_{\text{tx}}$,  $p_{n}$'s, $\forall n \in \cal N$, and $p_{\text{sum}}$ versus the resistance of  load $1$, i.e., $x_{1}$,  in Fig.  \ref{fig:Eff-Pl-versus-rL1}. 
It is observed that  $p_{\text{tx}}$, $p_2$, $p_3$ and $p_{\text{sum}}$ all increase over $x_1>0$; however,  $p_1$  initially increases over $0<x_1<\dot{x}_1=5.35\Omega$ and then  declines  over $x_1>5.35\Omega$.  
Note that in this example,   the condition  $r_{\textnormal{tx}}+\phi_n-2\varphi_n<0$ holds in Proposition \ref{Prop:5}.  The obtained results are all consistent with our analysis in Section III-A.
Besides, we observe that varying $x_{1}$ not only changes $p_{1}$, but also  the power delivered to other loads.   For instance,   receiver $1$ can help  receivers $2$ and $3$, which are farther away from the transmitter,  to receive more power by increasing its load resistance $x_1$. This is a useful mechanism that will be utilized later in this paper to mitigate the near-far issue.

Second, in Fig. \ref{fig:Eff-versus-rL1}, we plot the sum-power transfer efficiency $\rho$ as a function of $x_1$.  It is  observed that  $\rho$ follows a single-peak pattern over $x_n>0$, i.e., it  first increases over $0<x_1<\dddot{x}_1=0.95\Omega$, and then smoothly declines   over $x_1>0.95\Omega$. 
This result can be verified from  Proposition \ref{Prop:6} by considering the fact that in this example,   the condition $\varphi_n-\phi_n-r_{\text{tx}} < 0$ holds.  Note that when $x_1 \rightarrow \infty$, it follows from (\ref{eq:IT1m}) and (\ref{eq:PT}) that $i_1 \rightarrow 0$ and $p_1\rightarrow 0$. This is equivalent to disconnecting load $1$ from receiver $1$, i.e., setting $s_1=0$. As a result, the efficiency converges when  $x_1 \rightarrow \infty$, while the converged value depends on the parameters of the transmitter and the other two receivers.

Third, we set $x_1=x_2=x_3=2.5\Omega$, and plot the  power received by the three loads versus $w$ in Fig. \ref{fig:Power-vs-w}. It is observed that  $p_1$, $p_2$, and $p_3$  reach their individual peaks all at  $w=\dot{w}=17.97\times 10^6$rad/sec, which is in accordance to Remark \ref{Remark:w}.  It is worth noting that although $w$ and $x_n$'s can be jointly designed to achieve better performance over the case of optimizing $x_n$'s only with $w$ being fixed, this  problem is challenging to solve and thus is left as our future work.
\begin{figure}[t!]
	\centering
	\includegraphics[width=12cm]{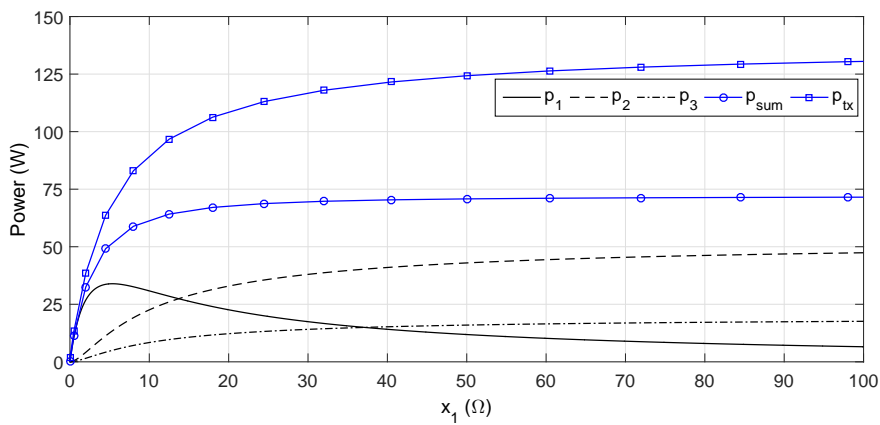}
	\caption{Input and output power versus $x_1$.} 
	\label{fig:Eff-Pl-versus-rL1}
\end{figure}
\begin{figure}[t!]
	\centering
	\includegraphics[width=12cm]{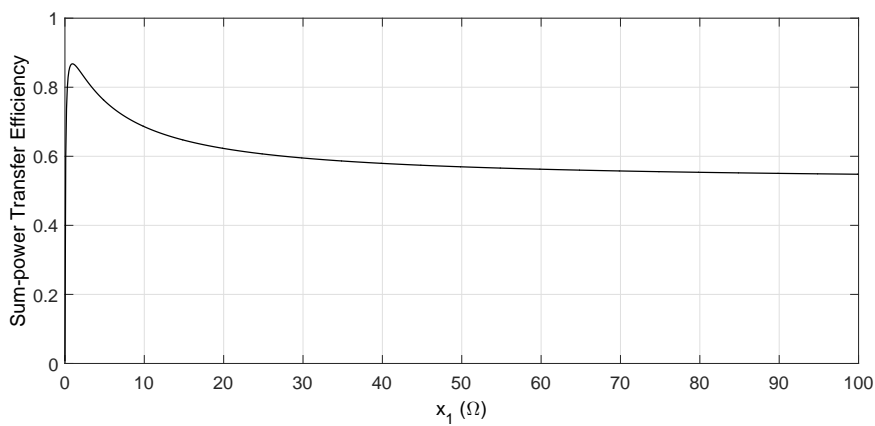}  
	\caption{The sum-power transfer efficiency  $\rho$ versus $x_1$.} 
	\label{fig:Eff-versus-rL1}
\end{figure} 
\begin{figure}[t!]
	\centering
	\includegraphics[width=12cm]{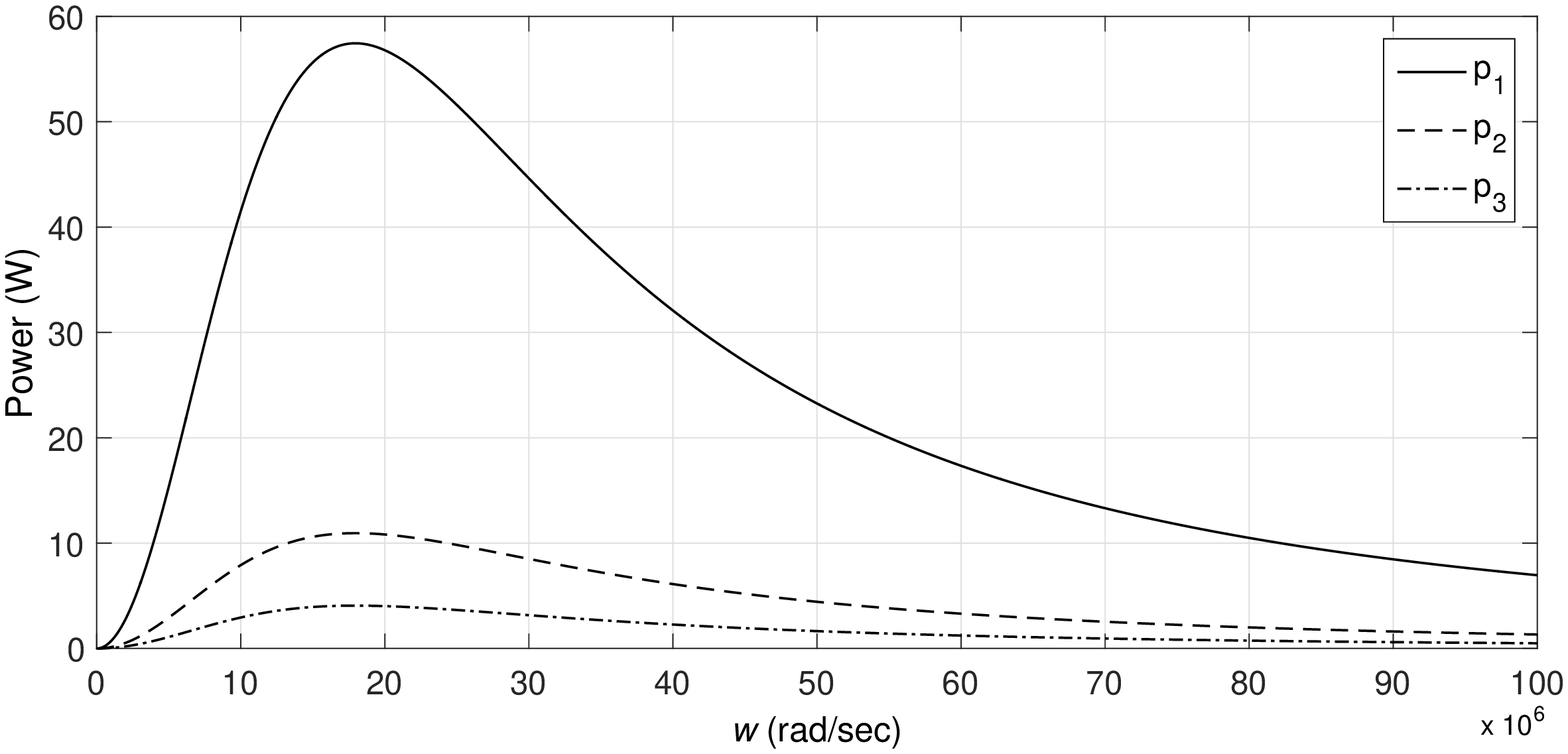}  
	\caption{The output power versus $w$.} 
	\label{fig:Power-vs-w}
\end{figure}
\section{Multiuser Charging Control Optimization} \label{Sec:Cent-Cha-Cont}
In this section, we  optimize the receivers' load  resistance  $x_{n}$'s  to minimize  the transmitter power $p_{\text{tx}}$ subject  to the given load constraints, by assuming that  $s_n=1$, $\forall n \in \cal N$, i.e., all the receives are connected to their loads. 
First, we consider the case with a  central controller at the transmitter, which  has the full knowledge of all receivers, including their circuit   parameters as well as their load  requirements,  to implement  centralized charging control. 
We then devise a distributed  charging algorithm for the receivers to independently adjust their load resistance iteratively, for the ease of practical implementation.
Last, we compare the performance of the two algorithms under a practical system setup.   
\subsection{Problem Formulation}
We  assume that in practice the resistance of each load $n$  can be adjusted over a given range $\underline{x}_n \le x_n \le \overline{x}_n$, where $\underline{x}_n >0$ ($\overline{x}_n \ge\underline{x}_n$) is the  lower (upper) limit  of the resistance. 
It is also assumed that the power delivered to each  load $n$ needs to be  higher than a given    minimum threshold  $\underline{p}_{n}>0$  to guarantee its quality of service.  
Next, we formulate the  optimization problem (P1) to minimize the  transmitter power $p_{\text{tx}}$ subject to the given load constraints of all receivers as follows. 
\begin{align} 
\mathrm{(P1)}: \hspace{-1mm}
\mathop{\mathtt{min}}_{\{ \underline{x}_n\le x_{n}\le \overline{x}_n\}_{n\in \cal N}}
&~\dfrac{|v_\text{tx}|^2}{2}\dfrac{1}{r_{\text{tx}}+ w^2 \sum_{k \in \cal N} h_{k}^2\left(r_{k}+x_{k}\right)^{-1}} \\
\mathtt{s.t.} 
&~\dfrac{|v_\text{tx}|^2}{2}\dfrac{w^2 h_n^2 x_{n} \left(r_{n}+x_{n}\right)^{-2}}{\left(r_{\text{tx}}+ w^2 \sum_{k \in \cal N} h_{k}^2\left(r_{k}+x_{k}\right)^{-1} \right)^2} \ge \underline{p}_n, \forall n \in \cal N. \label{eq:Const:p1-1}
\end{align} 
Although (P1) is  non-convex,  we propose a \textit{centralized}  algorithm to solve it optimally in the next subsection. 
\subsection{Centralized Algorithm}
First, based on (P1),  we  formulate the  maximization problem (P2), where its objective function is the  inverse of that of (P1) but with the same constraints as   (P1).  
\begin{align} 
\mathrm{(P2)}: \hspace{-1mm}
\mathop{\mathtt{max}}_{\{ \underline{x}_n\le x_{n}\le \overline{x}_n\}_{n\in \cal N}}
&~\dfrac{2}{|v_\text{tx}|^2} \left(r_{\text{tx}}+ w^2 \sum_{k \in \cal N} h_{k}^2\left(r_{k}+x_{k}\right)^{-1} \right) \\
\mathtt{s.t.} 
&~\dfrac{|v_\text{tx}|^2}{2}\dfrac{w^2 h_n^2 x_{n} \left(r_{n}+x_{n}\right)^{-2}}{\left(r_{\text{tx}}+ w^2 \sum_{k \in \cal N} h_{k}^2\left(r_{k}+x_{k}\right)^{-1} \right)^2} \ge \underline{p}_n,~ \forall n \in \cal N. \label{eq:Const:p2-1}
\end{align} 
It can be verified that the optimal solution to (P2)  also solves (P1); as a result, we can equivalently solve (P2) to derive the optimal solution to (P1). Although (P2) is still non-convex, we can re-formulate it as a convex problem by applying   change of variables. Specifically, we define a new set of variables as $y_n=1/(r_n+x_n)$, $\forall n \in \cal N$. Since $\underline{x}_n\le x_{n}\le \overline{x}_n$, it  follows that $\underline{y}_n\le y_n\le \overline{y}_n$, where $\underline{y}_n=1/(r_n+\overline{x}_n)$ and $\overline{y}_n=1/(r_n+\underline{x}_n)$.  Accordingly, we  rewrite (P2) as  (P3). 
\begin{align} 
\mathrm{(P3)}: \hspace{-1mm} 
\mathop{\mathtt{max}}_{\{ \underline{y}_n\le y_{n}\le \overline{y}_n\}_{n\in \cal N}}
&~\dfrac{2}{|v_\text{tx}|^2}{\left(r_{\text{tx}}+ w^2 \sum_{k \in \cal N} h_{k}^2 \hspace{0.5mm}y_k \right)} \\
\mathtt{s.t.} 
&~\dfrac{|v_\text{tx}|^2}{2} w^2 h_n^2 \left(r_n y_n^2- y_n\right)+\underline{p}_n \left(r_{\text{tx}}+ w^2 \sum_{k \in \cal N} h_{k}^2 \hspace{0.5mm} y_k \right)^2 \le 0,~ \forall n \in \cal N. \label{eq:Const:p3-1}
\end{align} 
Note that (P3) is a  convex optimization  problem, with a linear objective function and linear/quadratic inequality constraints over  $y_n$'s. As a result,  (P3) can be efficiently solved  using the existing software, e.g., CVX \cite{Boyd3}.  
Let $(y_1^*,\ldots,y_N^*)$ denote the optimal solution to (P3).  The optimal solution to (P2) is thus obtained by a change of variable as $x_n^*=1/y_n^* - r_n$, $\forall n \in \cal N$. 
The obtained $(x_1^*,\ldots,x_N^*)$ also solves (P1).
The  centralized algorithm  to solve (P1) is summarized  in Table \ref{Tabel:Centr}, denoted as Algorithm $1$.  
Since the feasibility of convex problem (P3) can be efficiently checked, in the rest of this paper, we assume that (P1), or equivalently (P3), is  feasible  without loss of generality.\hspace{-1mm}
\begin{table}[t!]
\begin{center} 
\caption{Centralized algorithm  for (P1).} \scriptsize{
 \hrule\vspace{0.1cm} 
\textbf{Algorithm $1$}
\hrule 
\begin{itemize}
\item[a)] For each receiver $n$, $\forall n \in \cal N$, given  $\underline{x}_n>0$ and  $\overline{x}_n> \underline{x}_n$, compute  $\underline{y}_n=1/(r_n+\overline{x}_n)$ and $\overline{y}_n=1/(r_n+\underline{x}_n)$. Accordingly, formulate the problem (P3).
\item[b)] {\bf If} (P3) is feasible, {\bf then} save its optimal solution as $(y_1^*,\ldots,y_N^*)$. Set $x_n^*=1/y_n^* - r_n$, $\forall n \in \cal N$. Return $(x_1^*,\ldots,x_N^*)$ as the optimal solution to (P1). 
\item[c)]{\bf If} (P3) is infeasible, {\bf then} it follows that there is no feasible solution to (P1) and thus the algorithm terminates.
\end{itemize} 
\hrule \label{Tabel:Centr} }
\end{center} 
\end{table}
\subsection{Distributed  Algorithm} \label{Sec:dist-Alg} 
In this subsection, we present  an alternative \textit{distributed} algorithm for  (P1), for the case without a  central controller  installed in the system.   In this algorithm, each   receiver adjusts its load  resistance independently according to its local information and a  one-bit feedback received from each of the other receivers indicating  whether the corresponding load  constraint is  satisfied or not. 
We denote the feedback from each receiver $n$ which is broadcasted to all other receivers  as $FB_n \in \{0,1\}$,  where $FB_n=1$ ($FB_n=0$) indicates that its load constraint is (not)  satisfied. 

In Section \ref{sec:Performance}, we show  that   the power delivered to each load  $n$,  $p_n$,  has two properties that can be exploited to adjust  $x_n$. 
First, $p_n$ strictly increases over $x_m>0$, $\forall m \neq n$, which means that other receivers can help boost $p_n$ by increasing their individual load resistance.
Second, $p_n$ has a single peak at $x_n=\dot{x}_n$, assuming that  all other load  resistance is fixed. 
Thus, over $0<x_n<\dot{x}_n$,  receiver $n$  can increase $p_n$  by increasing  $x_n$; similarly, for $x_n>\dot{x}_n$, it can increase $p_n$ by reducing $x_n$. 
Although  receiver $n$  cannot  compute  $\dot{x}_n$ from (\ref{eq:X_Star}) directly due to its incomplete information on  other receivers, it can  test whether  $0< x_n< \dot{x}_n$, $x_n=\dot{x}_n$, or $x_n>\dot{x}_n$ as follows. 
Let $p_n({x_n^+})$, $p_n({x_n})$, and $p_n(x_n^-)$ denote the power received by  load $n$ when its resistance is  set as $x_n+\Delta x$, $x_n$, and $x_n-\Delta x$, respectively, where $\Delta x>0$ is a small step size.  Assuming all the other load resistance is fixed,  receiver $n$ can make the following decision:\\
$\bullet$ If $p_n({x_n^+})>p_n({x_n})$ and $p_n({x_n^-})<p_n({x_n})$, then   $0<x_n<\dot{x}_n$; \\
$\bullet$ If $p_n({x_n^+})<p_n({x_n})$ and $p_n({x_n^-})<p_n({x_n})$, then  $x_n= \dot{x}_n$;\footnote{More precisely,   in this case we have $\dot{x}_n-\Delta x\le x_n \le \dot{x}_n+\Delta x$.} \\
$\bullet$ If $p_n({x_n^+})<p_n({x_n})$ and $p_n({x_n^-})>p_n({x_n})$, then  $x_n>\dot{x}_n$.

Now, we present the distributed algorithm in detail. 
The algorithm is implemented in an iterative manner, say, starting from receiver 1, where in each iteration, only one receiver $n$ adjusts its load resistance, while all the other receivers just broadcast their individual one-bit feedback $FB_m$, $m\neq n$, at the beginning of each iteration. 
We initialize  $x_n=\min\{\hspace{.7mm}\max\{({r_n r_{\textnormal{tx}}+w^2 h_n^2})/r_{\textnormal{tx}},\hspace{.7mm}\underline{x}_n\},\hspace{.7mm} \overline{x}_n\}$, $\forall n \in \cal N$, where $({r_n r_{\textnormal{tx}}+w^2 h_n^2})/r_{\textnormal{tx}}$ is obtained from (\ref{eq:X_Star}) by setting $\phi_n=0$, i.e., assuming that all other receivers have their loads disconnected.\footnote{This requires a protocol design so that when each new receiver is added in the system, its mutual inductance $h_n$ is measured and $x_n$ is accordingly computed and initially set at the receiver.} This is a reasonable starting point,  under which the power delivered to each receiver is maximized (see Proposition \ref{Prop:4}). Then, as the algorithm proceeds, all  receivers can gradually adjust their load resistance to help reduce the transmit power  while meeting the minimum power constraints of their individual loads.    
Specifically, at each iteration for receiver $n$, if   $p_n < \underline{p}_n$, then it will adjust $x_n$ to increase $p_n$. 
To find the direction for the update, the receiver needs to check for its current  $x_n$ whether  $0< x_n< \dot{x}_n$,  $x_n=\dot{x}_n$, or $x_n> \dot{x}_n$ holds, using the method aforementioned. On the other hand,  if $p_n > \underline{p}_n$, receiver $n$ can increase $x_n$  to help increase the power delivered to other loads when there exists any $m\neq n$ such that $FB_m=0$ is received;  or it can decrease $x_n$  to help reduce the transmitter power when $FB_m=1$, $\forall m \neq n$. 
In summary, we design the following protocol (with five cases)  for  receiver $n$ to update $x_n$.  \\
Case 1: If $p_n < \underline{p}_n$ and $0< x_n< \dot{x}_n$, set  $x_n \gets \min\{\overline{x}_n, x_n+ \Delta x \}$. \\
Case 2: If $p_n < \underline{p}_n$ and $x_n> \dot{x}_n$, set  $x_n \gets \max \{\underline{x}_n, x_n- \Delta x\}$. \\
Case 3: If $p_n > \underline{p}_n$, $x_n\neq {\dot x}_n$, and  $\exists m \neq n$, $FB_m=0$, set $x_n \hspace{.35mm} \gets \hspace{.35mm}\min  \{\overline{x}_n, x_n+\Delta x\}$.  \\
Case 4: If $p_n > \underline{p}_n$, $x_n\neq {\dot x}_n$, and   $FB_m=1$, $\forall m \neq n$, set $x_n \gets \max \{\underline{x}_n, x_n-\Delta x\}$.  \\
Case 5: Otherwise, no update occurs. 

We set a maximum number of iterations, denoted by  $itr_{\max}$, after which the algorithm will terminate.  The above distributed algorithm  is summarized in Table \ref{Tabel:Dist}, denoted as Algorithm $2$. It is worth noting that due to the simplicity of Algorithm $2$  as well as its distributed nature, this algorithm  may not converge to the optimal solution to (P1) in general, or may even fail to converge to a feasible solution to (P1) in certain cases, as will be shown by the numerical example presented next. 

\begin{table}[t!]
\begin{center} 
\caption{Distributed algorithm for (P1).} \scriptsize{
 \hrule\vspace{0.1cm} 
\textbf{Algorithm $2$}
\hrule 
\begin{itemize}
\item[a)]  Initialize $itr=1$ and $itr_{\max}>1$. Each receiver $n$ sets  $x_n=\min\{\hspace{.7mm}\max\{({r_n r_{\textnormal{tx}}+w^2h_n^2})/r_{\textnormal{tx}},\hspace{.7mm}\underline{x}_n\},\hspace{.7mm} \overline{x}_n\}$. 
\item[b)]  {\bf Repeat} from receiver $n=1$ to $n=N$:
\begin{itemize}
\item[$\bullet$] Receiver $n$  collects $FB_m$ from all other receivers $m\neq n$. 
\item[$\bullet$] Receiver $n$ updates its load resistance $x_n$ according to Cases 1--5. 
\item[$\bullet$] {\bf If} \hspace{1mm}$itr=itr_{\max}$, {\bf then} quit the loop  and the algorithm terminates.
\item[$\bullet$] Set $itr=itr+1$. 
\end{itemize}
\end{itemize}
\hrule  \label{Tabel:Dist} }
\end{center}
\end{table}  
\subsection{Performance Comparison}
We consider the  same system setup as that in Section  \ref{subsec:NumericalExample-Performnce}. We set $\underline{x}_n=1\Omega$ and $\overline{x}_n=100\Omega$, $\forall n\in \cal N$. 
We also set $\underline{p}_1=\underline{p}_2=17.5$W, but vary $p_3$ over  $0<\underline{p}_3 \le 37.95$W, where (P1) can be verified to be feasible in this specific region. 
For Algorithm $2$,  we use $\Delta x=10^{-3}$ and ${itr}_{\max}=3\times10^{5}$, with  ${itr}_{\max}> \sum_{k=1}^{N} (\overline{x}_k-\underline{x}_k)/\Delta x$, which is sufficiently large such that each receiver $n$ can search for its load resistance $x_n$ over the whole range of $[\underline{x}_n,~\overline{x}_n]$ before the algorithm terminates.  

Fig. \ref{fig:Simul-alg} compares the transmitter power $p_{\text{tx}}$ obtained  by both Algorithms $1$ and $2$ versus  $\underline{p}_3$. In this example,  Algorithm $2$  converges to a feasible solution to (P1) only over $0<\underline{p}_3\le33.75$W, while it yields an infeasible solution to (P1) if  $\underline{p}_3> 33.75$W. Particularly, with  $\underline{p}_3> 33.75$W, the load resistance  $(x_1,x_2,x_3)$ obtained via Algorithm $2$ can satisfy the power constraints of loads $1$ and $2$, but not that of load $3$.  Moreover, it is observed that with $0<\underline{p}_3\le33.75$W, Algorithm $2$ achieves almost the same minimum  $p_{\text{tx}}$ as that by  Algorithm $1$, which solves (P1) optimally. Notice that when  $\underline{p}_3> 33.75$W,  the obtained   $p_{\text{tx}}$ via Algorithm $2$ is lower than that of Algorithm $1$. However, this result is not meaningful as the solution by Algorithm $2$ in this case is not feasible.

Fig. \ref{fig:Simul-algConver} shows the convergence of Algorithm 2  under the above system setup with $\underline{p}_3=30$W. It is observed that  although  $p_1>\underline{p}_1$ and $p_2>\underline{p}_2$ at the first iteration, we have $p_3<\underline{p}_3$. As a result, both receivers $1$ and $2$ help receiver $3$ (which is most far-way from the transmitter)  for receiving more power by lowering their received power levels via increasing their individual load resistance. It is also observed that this algorithm takes around $0.4\times 10^5$ iterations to converge, since we use  $\Delta x =10^{-3}$ in the algorithm for updating $x_n$'s, which is a small step size to ensure smooth convergence. In practice, larger step size can be used to speed up the algorithm but at the cost of certain  performance loss.   

\begin{figure}
\centering
\includegraphics[width=12cm]{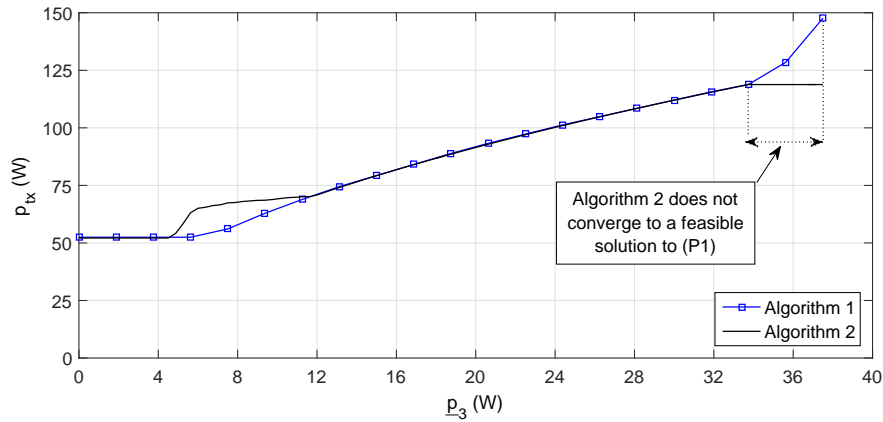} 
\caption{Performance comparison between centralized control (Algorithm 1) versus distributed control (Algorithm 2).} 
\label{fig:Simul-alg}
\end{figure}
\begin{figure}
\centering
\includegraphics[width=12cm]{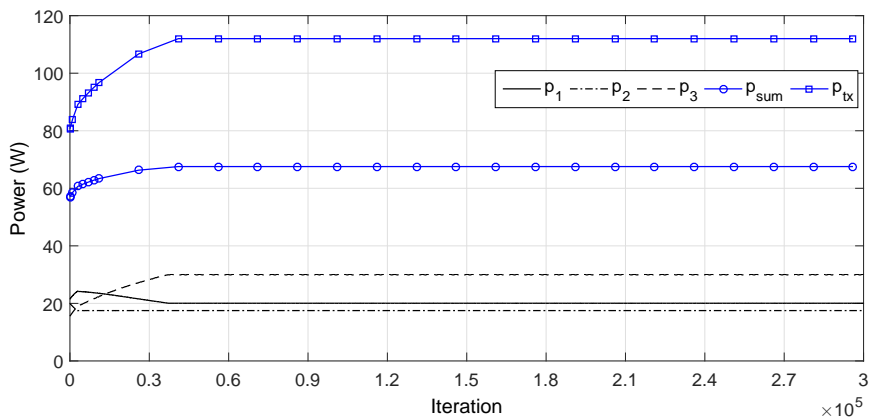}
\caption{Convergence performance of Algorithm $2$.} 
\label{fig:Simul-algConver}
\end{figure}
\section{Power Region Characterization}
In this section, we  characterize the achievable power region for the receiver loads   without versus with time sharing.  First, we propose a time-sharing scheme  to schedule multiuser power transfer  by  connecting/disconnecting loads to/from  their  receivers over time. 
We then propose a centralized algorithm to jointly optimize the time allocation and load resistance of receivers for time sharing based power transmission, by extending  that for (P1) in Section \ref{Sec:Cent-Cha-Cont} for the case without time sharing.  Last, numerical examples are  provided to  compare the power-region  performance of the multiuser  MRC-WPT system without versus with time sharing.
\subsection{Multiuser Power Transfer with Time Sharing}
\begin{figure}[t!] 
\centering 
\includegraphics[width=10.2cm]{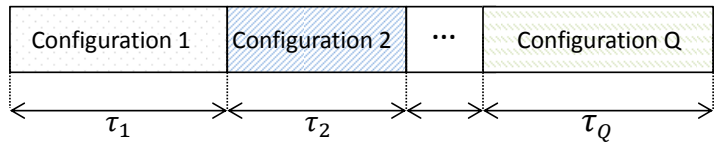} 
\caption{Time-sharing based multiuser power transfer.}  
\label{fig:time-split} 
\end{figure}  
As shown in Fig. \ref{fig:time-split}, there are in general $Q=2^N-1$ time-sharing  configurations for transferring power to $N$ receiver loads depending on the state of each receiver's switch. We index these configurations by $q$, $q\in {\cal Q}=\{1,\ldots,Q\}$. 
Specifically, let  ${\cal S}=\{(s_1,\ldots,s_N)~|~ s_n\in \{0,1\},~ \forall n \in {\cal N}\}$ denote  the set  consisting of all possible states of receiver switches. Without loss of generality, we remove the trivial case that all switches are open, i.e., $s_n=0$, $\forall n \in \cal N$, from $\cal S$ by setting ${\cal S}\leftarrow {\cal S}\setminus\{(0,...,0)\}$. As a result, we have the cardinality of $\cal S$ as  $|{\cal S}|=Q$. For convenience, we  assign  an one-to-one mapping between the elements in the two sets $\cal{Q}$ and   $\cal S$, i.e.,  we assign each configuration  $q \in \cal Q$ to  one  switch state  $(s_1,\ldots,s_N) \in \cal S$. By default, we assign configuration $q=1$ to  $(s_1,\ldots,s_N)=(1,\ldots,1)$, i.e., all switches are closed.

Let $\tau>0$ denote the total time available for power transmission.  
Let the time allocated for the power transmission under  configuration $q$ be denoted by $\tau_q$, with $0\le \tau_q \le \tau$.
We thus have  $\sum_{q \in \cal Q} \tau_q \le \tau$, where the strict inequality occurs when the required energy, $\overline{p}_n\tau$, $n=1,...,N$, at all loads are satisfied by the end of $Q$-slot transmissions, where the voltage source at the transmitter can be switched off for the remaining time ($\tau-\sum_{q\in{\cal Q}}\tau_q)>0$ to save energy.\footnote{Note that disconnecting all receivers from their loads, i.e. setting $s_n=0$, $\forall n \in \cal N$, cannot achieve this goal. This is due to the fact that the ohmic resistance of the transmitter circuit still consumes power as long as the transmitter voltage source is on.} 
Over all configurations, the average  transmitter power and the average power delivered to each load  $n$ can be obtained from (\ref{eq:PT}) and (\ref{eq:PL}), respectively,  as follows:
\begin{align}
&p_{\text{tx}}\hspace{-.5mm}=\hspace{-2mm}\sum_{q \in {\cal Q}}\hspace{-1mm}\dfrac{|v_{\text{tx}}|^2}{2\tau} \dfrac{1}{r_{\text{tx}}+ w^2 \sum_{k \in {\cal N}_q} h_{k}^2\left(r_{k}+x_{k,q}\right)^{-1}}\tau_q, \label{eq:P-TX-TS}\\
&p_n\hspace{-.5mm}=\hspace{-2mm}\sum_{q \in {\cal Q}_n}\hspace{-1mm} \dfrac{|v_{\text{tx}}|^2}{2\tau} \dfrac{w^2 h_n^2 x_{n,q} \left(r_{n}+x_{n,q}\right)^{-2}}{\left(r_{\text{tx}}+ w^2 \sum_{k \in {\cal N}_q} h_{k}^2\left(r_{k}+x_{k,q}\right)^{-1}\right)^2} \tau_q,\hspace{-1.5mm} \label{eq:Pn-TS}
\end{align} 
where $x_{n,q}$ is the resistance value of load $n$ under configuration $q$.  Furthermore, ${\cal N}_q  \subseteq \cal N$  denotes the subset of receivers with their loads connected under configuration $q$, while  ${\cal Q}_n \subseteq \cal Q$ denotes the subset of configurations under which receiver $n$ has its load connected. 
Compared to the previous case without time sharing, the time allocation $\tau_q$'s, $\forall q \in \cal Q$,  can provide  extra  degrees of freedom for  performance optimization.
\subsection{Power Region Definition}
The \textit{power region} is defined as the set of power-tuples achievable for all loads with a given transmission time $\tau$ subject to their adjustable resistance values.  Specifically, the power region for the case without time sharing (i.e., all loads receive power concurrently)  is defined as 
\begin{equation} \label{eq:Reg_WiOu}
{\cal R}_{\text{without-TS}}=  \bigcup_{\{\underline{x}_n \le x_n\le \overline{x}_n\}_{n \in \cal N}}  (p_1,\ldots,p_N),
\end{equation}
with $p_n$'s, $\forall n \in \cal N$, are given in (\ref{eq:PL}).  Similarly, the power region for the case with time sharing is   defined as
\begin{equation} \label{eq:Reg_With}
{\cal R}_{\text{with-TS}}= \bigcup_{\substack{ \{0\le\tau_q \le \tau,\}_{q \in \cal Q},~ \sum_{q \in \cal Q} \tau_q \le \tau \\ \{\underline{x}_n \le x_{n,q}\le \overline{x}_n\}_{n \in {\cal N},q \in \cal Q}} } (p_1,\ldots,p_N),
\end{equation}
with $p_n$'s, $\forall n \in \cal N$, are given in (\ref{eq:Pn-TS}). It is evident  that the power region with time sharing is no smaller than that without time sharing in general, i.e., ${\cal R}_{\text{without-TS}} \subseteq {\cal R}_{\text{with-TS}}$, since by simply setting $\tau_{1}=\tau$ and $\tau_q=0$, $\forall q \neq 1$, we have ${\cal R}_{\text{without-TS}} ={\cal R}_{\text{with-TS}}$.  
\subsection{Centralized Algorithm  with Time Sharing: Revised }
In this subsection, we extend the centralized algorithm in Section \ref{sec:Performance} without time sharing  to the case with time sharing by jointly optimizing the time allocation  and load resistance of all receivers to minimize the transmitter power subject to the given load (average received power)  constraints.  Hence, we consider  problem (P4) as follows. 
\begin{align} 
\mathrm{(P4)}:\hspace{-1mm} 
\mathop{\mathtt{min}}_{\substack{ \{0\le \tau_q \le \tau\}_{q\in \cal Q},~ \{\underline{x}_n \le x_{n,q} \le \overline{x}_n\}_{n \in {\cal N}, q \in {\cal Q}} }} 
&~\sum_{q \in {\cal Q}} \dfrac{|v_{\text{tx}}|^2}{2\tau} \dfrac{1}{\hspace{2mm}r_{\text{tx}}+ w^2 \sum_{k \in {\cal N}_q} h_{k}^2\left(r_{k}+x_{k,q}\right)^{-1}\hspace{2mm}} \tau_q \label{eq:p4-object}
\\
\mathtt{s.t.} 
&~\hspace{-1.0mm}\sum_{q \in {\cal Q}_n}  \dfrac{|v_{\text{tx}}|^2}{2\tau} \dfrac{w^2 h_n^2 x_{n,q} (r_{n}+x_{n,q})^{-2}}{\left(r_{\text{tx}}+ w^2 \sum_{k \in {\cal N}_q} h_{k}^2\left(r_{k}+x_{k,q}\right)^{-1} \right)^2}  \tau_q\ge \underline{p}_n, ~\forall n \in \cal N, \label{eq:Const1-p4} \\
&~\sum_{q \in \cal Q} \tau_q\le  \tau. \label{eq:Const2-p4}
\end{align} 
Although (P4) is non-convex, we can apply the technique of \textit{alternating optimization}  to solve it sub-optimally in general, as discussed below. Since (P1) is assumed feasible, the feasibility of (P4) is ensured due to the fact that  ${\cal R}_{\text{without-TS}} \subseteq {\cal R}_{\text{with-TS}}$. 

Initialize  $\tau_{1}=\tau$ and  $x_{n,1}=x_{n}^{*}$, $\forall n \in \cal N$, where $(x_1^*,\ldots,x_N^*)$ denotes the  optimal solution to (P1) for the case without time sharing.  Moreover, initialize $\tau_q=0$ and $x_{n,q}=\min\{\hspace{.7mm}\max\{\big({r_n r_{\textnormal{tx}}+w^2 h_n^2}\big)/r_{\textnormal{tx}},\hspace{.7mm} \underline{x}_n \}, \hspace{.7mm} \overline{x}_n\}$, $\forall n \in \cal N$, $\forall q \neq 1$. 
At each iteration $itr$, $itr=1,2,\ldots$, we  design $\tau_q$'s and $x_{n,q}$'s alternatively according to the following procedure. 
First, we solve (P4) over $\tau_q$'s, $\forall q\in \cal Q$,   while the rest of variables are all fixed. 
The resulting  problem is a linear programming (LP) which can be efficiently solved using the existing    software,  e.g. CVX \cite{Boyd3}.  We  update $\tau_q$'s as the obtained solution. 
Next, we optimize the load resistance for different configurations  sequentially, e.g., starting from  configuration $1$ to $Q$. For each configuration $q$, we solve (P4) over $x_{n,q}$'s, $\forall n\in \cal N$,  with the rest of variables all being fixed. We thus consider the optimization problem (P4$-$q) as follows.
\begin{align} 
\mathrm{(P4-q)}:\hspace{-1mm}\mathop{\mathtt{min}}_{\{\underline{x}_n \le x_{n,q} \le \overline{x}_n\}_{n \in {\cal N}} } 
&~\dfrac{|v_{\text{tx}}|^2}{2\tau} \dfrac{1}{\hspace{2mm}r_{\text{tx}}+ w^2 \sum_{k \in {\cal N}_q} h_{k}^2(r_{k}+x_{k,q})^{-1}\hspace{2mm}} \tau_q \label{eq:p4_q-object}
\\
\mathtt{s.t.} 
&~\dfrac{|v_{\text{tx}}|^2}{2\tau} \dfrac{w^2 h_n^2 x_{n,q} \left(r_{n}+x_{n,q}\right)^{-2}}{\left(r_{\text{tx}}+ w^2 \sum_{k \in {\cal N}_q} h_{k}^2(r_{k}+x_{k,q})^{-1} \right)^2}  \tau_q\ge \underline{p}_n-p_{n,-q}, ~\forall n \in \cal N. \label{eq:Const1-q-p4}
\end{align}  
where  $p_{n,-q}=\sum_{m \in {\cal Q}_n\setminus\{q\}} |v_{\text{tx}}|^2  w^2 h_n^2 x_{n,m} (r_{n}+x_{n,m})^{-2}\tau_m /(2\tau(r_{\text{tx}}+ w^2 \sum_{k \in {\cal N}_m} h_{k}^2(r_{k}+x_{k,m})^{-1} )^2)$.   
For each receiver $n$, its load power constraint given in (\ref{eq:Const1-p4}) is re-expressed in  (\ref{eq:Const1-q-p4}), where all power terms that do not involve $x_{n,q}$, $\forall n\in \cal N$, are moved to the right-hand side (RHS) of the inequality, denoted by $p_{n,-q}$, which is treated as constant in (P4$-$q). From (\ref{eq:Pn-TS}), it follows that $p_{n,-q}$ denotes the average power delivered to load $n$ under all other  configurations,  $ m\in {\cal Q}_n\setminus\{q\}$.  
Problem  (P4$-$q)  has the same structure as (P1); as a result, we can solve it using an algorithm similar to Algorithm $1$.
We then update   $x_{n,q}$'s, $\forall n \in \cal N$, as the obtained   solution to (P4$-$q). 
At the end of each iteration $itr$, we compute  $p_{\text{tx}}^{(itr)}$ as the objective value given in (\ref{eq:p4-object}).
The algorithm  stops when $\Delta p_{\text{tx}}=p_{\text{tx}}^{(itr-1)}-p_{\text{tx}}^{(itr)}\le \Delta p$ holds, with $p_{\text{tx}}^{(0)}=\infty$ by default and $\Delta p>0$ denoting a given stopping threshold. 
The above alternating optimization based  algorithm for  (P4)   is summarized in Table \ref{Tabel:Alternating}, denoted as Algorithm $3$.  
\begin{table}[t!]
\begin{center} 
\caption{Algorithm for (P4).} \scriptsize{
 \hrule\vspace{0.1cm} 
\textbf{Algorithm $3$}
\hrule 
\begin{itemize}
\item[a)]  Initialize $itr=1$, $\Delta p>0$ and $\Delta p_{\text{tx}}=p_{\text{tx}}^{(0)}=\infty$. Initialize  $\tau_{1}=\tau$ and  $x_{n,1}=x_{n}^{*}$, $\forall n \in \cal N$. 
Initialize $\tau_q=0$ and $x_{n,q}=\min\{\hspace{.7mm}\max\{\big({r_n r_{\textnormal{tx}}+w^2 h_n^2}\big)/r_{\textnormal{tx}},\hspace{.7mm} \underline{x}_n \}, \hspace{.7mm} \overline{x}_n\}$, $\forall n \in \cal N$, $\forall q \neq 1$. 
\item[b)]  {\bf While} $\Delta p_{\text{tx}}>\Delta p$ \textbf{do}: 
\begin{itemize}
\item[$\bullet$] Solve (P4) over $\tau_q$'s, $\forall q \in \cal Q$, assuming that the rest of variables are all fixed. Update  $\tau_q$'s as the optimal solution to the resulting problem.
\item[$\bullet$] \textbf{For} $q=1$ to $q=Q$ \textbf{do} : 
\begin{itemize}
\item[--] Solve (P4$-$q) using similar algorithm as Algorithm $1$. Update  $x_{n,q}$'s, $\forall n \in \cal N$, as the  solution to (P4$-$q). 
\end{itemize}
\item[$\bullet$] Compute   (\ref{eq:p4-object}) and  save the obtained value  as    $p_{\text{tx}}^{(itr)}$. Accordingly, set  $\Delta p_{\text{tx}}=p_{\text{tx}}^{(itr-1)}-p_{\text{tx}}^{(itr)}$.
\item[$\bullet$] Set $itr=itr+1$.
\end{itemize}
\item[d)] Return  $x_{n,q}$'s  and $\tau_q$'s as the solution to (P4).
\end{itemize}
\hrule  \label{Tabel:Alternating} }
\end{center} \vspace{-2mm}
\end{table}  
Note that the convergence of Algorithm 3 for (P4) is ensured since the objective value of (P4), i.e., $p_{\text{tx}}^{(itr)}$, is non-increasing  over iterations, while the constraints in (\ref{eq:Const1-p4}) and (\ref{eq:Const2-p4}) are all satisfied at each iteration.  
\subsection{Numerical Example}
We consider the  same system setup as that in Section  \ref{subsec:NumericalExample-Performnce}. We set $\underline{x}_n=1\Omega$ and $\overline{x}_n=100\Omega$, $\forall n\in \cal N$.   For Algorithm $3$ in the case with time sharing,  we  set $\Delta p=10^{-3}$.  

Figs. \ref{fig:Region-comparision}(a), \ref{fig:Region-comparision}(b), and \ref{fig:Region-comparision}(c)  show  power regions  ${\cal R}_{\text{without-TS}}$ versus ${\cal R}_{\text{with-TS}}$ given by (\ref{eq:Reg_WiOu}) and (\ref{eq:Reg_With}), respectively, under three different resonant angular frequencies of  $w=14.2\times 10^6$rad/sec, $w=42.6 \times 10^6$rad/sec, and $w=127.8\times 10^6$rad/sec, respectively. For the purpose of exposition, we consider only the two-user case for Fig. \ref{fig:Region-comparision} by assuming  receiver $3$ is disconnected from its load, i.e., $s_3=0$ (see Fig. 2). 
It is observed that  ${\cal R}_{\text{with-TS}}$ is always larger than ${\cal R}_{\text{without-TS}}$, as expected. 
It is also observed that the power region difference becomes less significant as the operating frequency decreases. 
This  result is  explained as follows. When $w$ is sufficiently small, the power delivered to each load $n$ in the case without time sharing, given in (\ref{eq:PL}), can be approximated as $p_{n} \approx |v_{\text{tx}}|^2 w^2 h_n^2 x_n(r_n+x_n)^{-2}/ (2 r_{\text{tx}}^2)$, from which it follows that there is no evident  coupling effect among receivers.  
In this regime, given load resistance $x_n$, the power received by load $n$ does not depend on  whether the other receivers are connected to their loads or not.  Thus, time sharing is less effective and hence cannot enlarge the power region over that  without time sharing. Last, note that since Algorithm $3$ for (P4) in general obtains a suboptimal solution,  ${\cal R}_{\text{with-TS}}$ shown in Fig. \ref{fig:Region-comparision} is only an   achievable power region under the time-sharing scenario.

Next, we consider again the case with all three users in Fig. 2. We also fix $w=42.6\times 10^6$rad/sec and $\underline{p}_1=\underline{p}_2=5$W.  
Fig. \ref{fig:Simul-alg3} compares the  transmitter power $p_{\text{tx}}$ obtained  using  Algorithms $1$ and $3$ over $\underline{p}_3$, with $0<\underline{p}_3 \le 55.9$W, where Algorithm $3$ takes at most $4$ iterations to converge. 
It is  observed that Algorithm $3$  achieves lower  $p_{\text{tx}}$ than Algorithm $1$ over all values of $\underline{p}_3$.
This result is expected due to the fact that time sharing  provides  extra degrees of freedom for multiuser power transmission scheduling. Consequently, the time allocation and  load resistance for the receivers can be jointly optimized   to further reduce the transmitter power as compared to the case without time sharing when only load resistance is optimized.  

\begin{figure}[t]
	\centering
	\includegraphics[width=12cm]{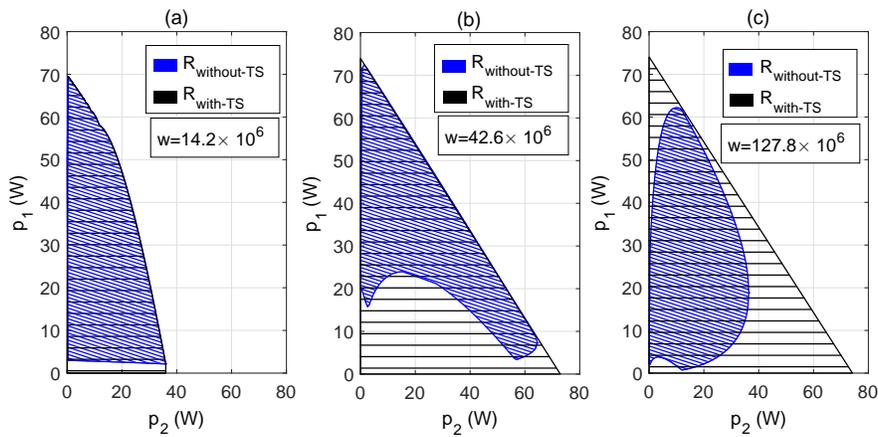}\vspace{-2mm}
	\caption{Power regions without versus with time sharing.} 
	\label{fig:Region-comparision}
\end{figure}

\begin{figure} [t]
	\centering
	\includegraphics[width=12cm]{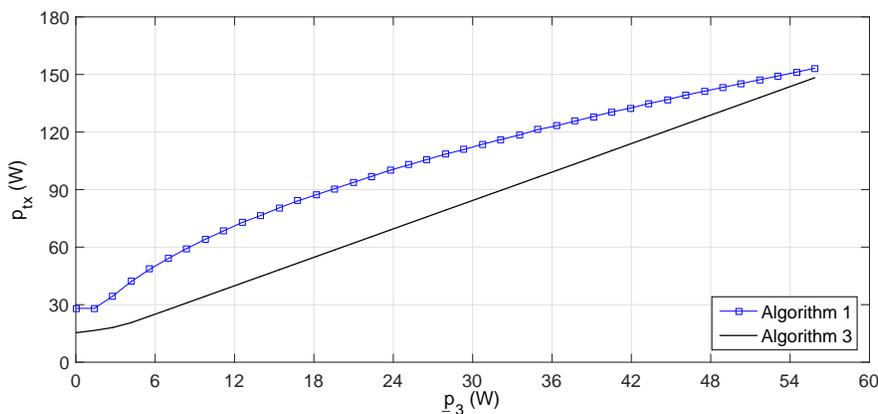} \vspace{-7mm}
	\caption{Performance comparison between Algorithm $1$ (without time sharing) versus Algorithm $3$ (with time sharing).}  
	\label{fig:Simul-alg3} 
\end{figure} 
\section{Conclusion}
In this paper, we have studied  a point-to-multipoint   WPT system via MRC. 
We derive closed-form expressions for the input
and output power in terms of the system parameters for arbitrary number of receivers. 
Similar to
other multiuser wireless applications such as those in wireless communication
and far-field microwave based WPT, a near-far fairness issue is revealed in our considered MRC-WPT system.
To tackle this problem, we propose a centralized charging control  algorithm for jointly optimizing the receivers' load resistance to minimize the transmitter power subject to the given load power  constraints. For ease of practical implementation, we also propose a distributed algorithm for receivers to iteratively adjust
their load resistance based on local information and one-bit feedback
from each of the other receivers. We show by simulation that the distributed algorithm performs very  close to the centralized
algorithm with a finite number of iterations.
Last, we characterize the achievable multiuser  power regions for the loads without and with time sharing and compare them through numerical examples. 
It is shown that time sharing  can help further mitigate the near-far  issue by  enlarging the achievable power region  as compared to the case without time sharing.
As a concluding remark, we would like to point out that MRC-WPT  is a promising new research area in which many tools from signal processing and optimization can be  applied to devise innovative solutions, and we hope that this paper will open up an  avenue for more future works along this direction.  
\appendix
\subsection{Proof of Proposition \ref{Prop:3} }
From  (\ref{eq:PT}), it follows that
\begin{align} \label{eq:Proof-P3-2}
\dfrac{\partial p_{\text{tx}}}{ \partial x_n}=\dfrac{|v_{\text{tx}}|^2}{2} \dfrac{w^2 h_n^2 (r_n+x_n)^{-2}}{\left(r_{\text{tx}}+w^2 \sum_{k=1}^{N} h_k^2\left(r_k+x_k\right)^{-1} \right)^2},
\end{align}
where it can be easily verified that  $\partial p_{\text{tx}}/ \partial x_n>0$ over $x_n>0$. This means that $p_{\text{tx}}$  strictly increases over $x_n>0$. The proof of Proposition \ref{Prop:3} is thus completed.
\subsection{Proof of Proposition \ref{Prop:4} }
From  (\ref{eq:PL}), it follows that for $m\neq n$, 
\begin{align} \label{eq:Proof-P4-2}
\hspace{-2mm}\dfrac{\partial p_{m}}{\partial x_n}=\dfrac{|v_{\text{tx}}|^2}{2}\dfrac{2w^4 x_mh_m^2 h_n^{2}\left(r_m+x_m\right)^{-2} \left(r_n+x_n\right)^{-2}}{\left(r_{\text{tx}}+w^2 \sum_{k=1}^{N} h_k^2(r_k+x_k)^{-1} \right)^3}, \hspace{-1mm}
\end{align}
where it can be easily verified that  $\partial p_{m}/ \partial x_n>0$ over $x_n>0$. This means that $p_m$, $m\neq n$,  strictly increases over $x_n>0$.
 Similarly, for $m=n$, from (\ref{eq:PL}), it follows that 
\begin{align} \label{eq:Proof-P4-3}
&\dfrac{\partial p_{n}}{\partial x_n}=\dfrac{|v_{\text{tx}}|^2}{2}\dfrac{w^2 h_n^2\left(r_n+x_n\right)^{-3}}{\left(r_{\text{tx}}+w^2 \sum_{k=1}^{N} h_k^2(r_k+x_k)^{-1} \right)^3} \bigg(w^2h_n^2 \nonumber \\
&\hspace{38mm}+\left(r_{\text{tx}}+\phi_n\right) \left(r_n-x_n\right) \bigg),
\end{align}
where  $\phi_n=w^2 \sum_{k \in {\cal N}\setminus \{n\}} h_k^2(r_k+x_k)^{-1}$. It can be easily verified that  $\partial p_{n}/ \partial x_n>0$ over $0<x_n<\dot{x}_n$,   with $\dot{x}_n>0$ given in  (\ref{eq:X_Star}), and $\partial p_{n}/ \partial x_n<0$ over $x_n>\dot{x}_n$. The proof of Proposition \ref{Prop:4} is thus completed.  
\subsection{Proof of Proposition \ref{Prop:5} }
Since  $\partial p_{\text{sum}}/\partial x_n=\sum_{k=1}^{N}\partial p_{k}/\partial x_n$, we can easily prove  Proposition \ref{Prop:5}  using the same argument  for  the proof of  Proposition \ref{Prop:4}. The detail is thus omitted for brevity.
\subsection{Proof of Proposition \ref{Prop:6} }
From  (\ref{eq:eta}), it follows that
\begin{align} \label{eq:Proof-rh}
&\hspace{-2.2mm}\dfrac{\partial \rho}{\partial x_n}=  \dfrac{|v_{\text{tx}}|^2}{2}\dfrac{w^2 h_n^2\left(r_n+x_n\right)^{-4} }{\left(r_{\text{tx}}+w^2 \sum_{k=1}^{N} h_k^2(r_k+x_k)^{-1} \right)^2} \bigg(2r_n \varphi_nx_n \nonumber \\ 
&\hspace{-2.2mm}+r_nw^2h_n^2+x_n^2\left(\varphi_n\hspace{-0.1mm}-\hspace{-0.1mm}\phi_n\hspace{-0.1mm}-\hspace{-0.1mm}r_{\text{tx}}\right)\hspace{-0.1mm}+\hspace{-0.1mm}r_n^2\left(\varphi_n\hspace{-0.1mm}+\hspace{-0.1mm}\phi_n\hspace{-0.1mm}+\hspace{-0.1mm}r_{\text{tx}} \right) \hspace{-1mm} \bigg).\hspace{-1.2mm}
\end{align}
Accordingly, it can be verified that when  $\varphi_n-\phi_n-r_{\text{tx}} \ge 0$, $\rho$  strictly increases over $x_n>0$. Otherwise, if $\varphi_n-\phi_n-r_{\text{tx}}< 0$, then $\rho$ increases over $0<x_n<\dddot{x}_n$ due to the fact that $x_n^2(\varphi_n-\phi_n-r_{\text{tx}})+2r_n \varphi_nx_n+r_n w^2 h_n^2+r_n^2(\varphi_n+\phi_n+r_{\text{tx}})<0$ and the rest of terms in the right-hand side of (\ref{eq:Proof-rh}) are all  positive; similarly $\rho$ declines over $x_n>\dddot{x}$. The proof of Proposition \ref{Prop:6} is thus completed.
\subsection{Impedance Characterization  of EM Coils} \label{App:Paramters}
As shown in Fig. \ref{fig:2Coil-Generic},  we consider two circular EM coils, indexed by $i$, $i\in \{1,2\}$, in the free space (no external electric and/or magnetic fields exist). 
Without loss of generality, we assume that the center of EM coil $1$ is located at the origin, i.e., $(x=0,y=0,z=0)$, and its  surface  normal vector  is given by  $\vec{n}_1=\vec{z}$. 
On the other hand, we assume that the center of EM coil $2$ is located at $(x=x^\prime, y=y^\prime, z=z^\prime)$ and its surface  normal vector is given by  $\vec{n}_2=n_{x,2}\vec{x}+n_{y,2}\vec{y}+n_{z,2} \vec{z}$,  with  $\sqrt{n_{x,2}^2+n_{y,2}^2+n_{z,2}^2}=1$. 
As shown in Fig. \ref{fig:Coil-Example}, we assume that each EM coil $i$ consists of $b_{i}$ closely wound turns of round shaped wire, where the inner radius of coil is denoted by $e_{\text{inner},i}>0$, while  the outer radius is denoted by  $e_{\text{outer},i}>e_{\text{inner},i}$. 
\begin{figure}[t!]
\centering
\includegraphics[width=11cm]{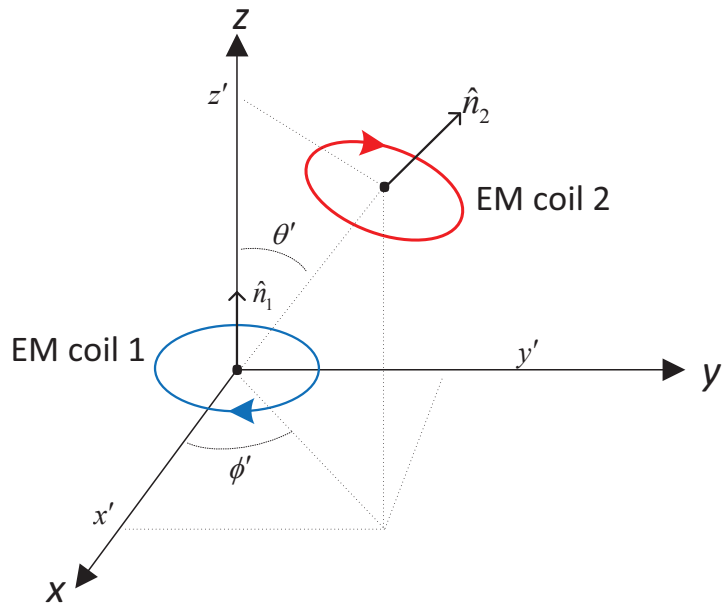}
\caption{A system of two circular EM coils.} 
\label{fig:2Coil-Generic} 
\end{figure}

\begin{figure}[t!]
\centering
\includegraphics[width=11cm]{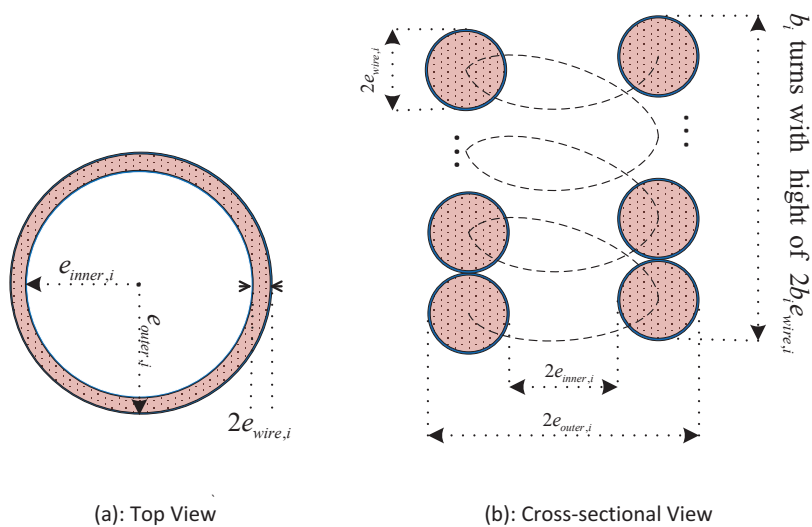} 
\caption{Circular EM coil.}  
\label{fig:Coil-Example}
\end{figure}
Accordingly, the average radius of each EM coil $i$ and the radius of the wire used to build  this coil  are obtained as $e_{\text{ave},i}=(e_{\text{outer},i}+e_{\text{inner},i})/2$ and $e_{\text{wire},i}=(e_{\text{outer},i}-e_{\text{inner},i})/2$, respectively.
Let $r_{i}$ and $l_{i}$ denote  the resistance and self-inductance of  each EM coil $i$.
Given $e_{\text{wire},i}\ll e_{\text{ave},i}$, i.e., the wire is much thinner than the average radius, which is practically valid, we thus have \cite{Chen}:
\begin{align}
r_{i}&=\dfrac{2 \sigma_i b_i e_{\text{ave},i}}{e_{\text{wire},i}^2}, \\
l_{i}&=b_i^2 e_{\text{ave},i} \mu \big(\ln(\dfrac{8 e_{\text{ave},i}}{e_{\text{wire},i}}) -2\big),
\end{align}
where $\sigma_i$ is the resistivity of the wire used in EM coil $i$ and  $\mu=4 \pi \times 10^{-7}$N/A$^2$, which denotes the magnetic permeability of the air.  Let $h$ denote the mutual inductance between the two EM coils. 
By assuming  $d \triangleq \sqrt{{x^\prime}^2+{y^\prime}^2+{z^\prime}^2}\gg e_{\text{ave},1},e_{\text{ave},2} $, i.e., the distance between  the two EM coils is much larger than their average radiuses,  we  have \cite{Cheng}:
\begin{align}
\hspace{-1mm}h= \hspace{-1mm}- \dfrac{\pi \mu b_1 b_2  e_{\text{ave},1}^2 e_{\text{ave},2}^2}{4 d^3} &\bigg(3 \cos(\theta^\prime) \sin(\theta^\prime)\cos(\phi^\prime) n_{x,2} \nonumber \\
&+ 3 \cos(\theta^\prime)\sin(\theta^\prime)\sin(\phi^\prime) n_{y,2}\nonumber \\
&+(2\cos^2(\theta^\prime)-\sin^2(\theta^\prime)) n_{z,2} \bigg),  \hspace{-1mm}
\end{align}
where $\theta^{\prime}=\cos^{-1}(z^{\prime}/d)$ and $\phi^{\prime}=\tan^{-1}(y^{\prime}/x^{\prime})$. 


\end{document}